\newcommand{\keywords}[1]{\par\addvspace\baselineskip
\noindent\keywordname\enspace\ignorespaces#1}
\begin{document}

\pagestyle{plain}

\mainmatter

\title{Branching Bisimilarity of Normed BPA Processes is in $\NEXPTIME$}

\author{Wojciech Czerwi\'{n}ski\inst{1}\thanks{W. Czerwinski
acknowledges a partial support by the Polish National Science Centre grant 2013/09/B/ST6/01575.}
\and Petr Jan\v{c}ar\inst{2}
%
%
}
\institute{University of Warsaw \and
FEI, Technical University of Ostrava \\
\email{wczerwin@mimuw.edu.pl, petr.jancar@vsb.cz}}

\maketitle

\begin{abstract}
\noindent
Branching bisimilarity on normed BPA processes
was recently shown to be decidable by Yuxi Fu (ICALP 2013) but
his proof has not provided 
any upper complexity bound. We present a simpler approach based on
relative prime decompositions that leads 
to a nondeterministic exponential-time algorithm; this is close to
the known exponential-time lower bound.
\keywords{Verification,
branching bisimulation equivalence,
Basic Process Algebra processes,
complexity.}
\end{abstract}

\section{Introduction}
Similarly as language equivalence in automata theory,
bisimulation equivalence (also called bisimilarity) is
a fundamental notion in theory of processes.
The decidability and complexity questions 
for bisimilarity on various models of infinite-state systems have been
explored in a long list of research papers.
(See~\cite{Srba:Roadmap:04} for an updated overview of a specific
area of process rewrite systems.)

One of the basic models is called Basic Process Algebra (BPA),
which can be related to context-free grammars in Greibach normal form.
The configurations are identified with sequences of variables
(nonterminals), and a configuration can change by performing an 
action (rather than reading a symbol) in which case its
leftmost variable is rewritten.
The seminal paper by Baeten, Bergstra and Klop~\cite{DBLP:journals/jacm/BaetenBK93} 
showed the decidability of bisimilarity for BPA configurations in the
normed case, where each variable can be stepwise
rewritten to the empty word;
this decidability result was later extended 
to the whole class BPA~\cite{DBLP:journals/iandc/ChristensenHS95}.
The exploration of complexity culminated by showing that the normed
case is, in fact, in $\PTIME$~\cite{DBLP:journals/tcs/HirshfeldJM96}
(see~\cite{CzerwinskiThesis} for the so far best known upper bound).
The complexity for the whole class BPA is known to lie between
$\EXPTIME$~\cite{DBLP:journals/ipl/Kiefer13} and $2$-$\EXPTIME$
(claimed in~\cite{DBLP:conf/mfcs/BurkartCS95} and explicitly proven 
in~\cite{DBLP:journals/corr/abs-1207-2479}).

In the presence of silent (unobservable) actions the 
problems become harder. 
The decidability question for weak bisimilarity of (even normed) BPA configurations
is a long-standing open problem; we only know  
$\EXPTIME$-hardness here,  already in the normed case~\cite{DBLP:journals/entcs/Mayr04}.
There is a similar long-standing open problem for Basic Parallel Processes, 
which is the parallel (or commutative) version of BPA. 
Positive results were recently achieved for a finer version of
weak bisimilarity, called \emph{branching bisimilarity}. (It was introduced
by van Glabbeek and Weijland~\cite{DBLP:journals/jacm/GlabbeekW96}
as the coarsest equivalence respecting branching time in some sense.)
It was shown that branching bisimilarity
is decidable on the normed 
Basic Parallel Processes~\cite{DBLP:journals/mst/CzerwinskiHL14}, and 
Yuxi Fu~\cite{DBLP:conf/icalp/Fu13} has shown the 
decidability for normed BPA configurations; the latter is the starting point
of our paper. 
We also note that the fresh paper~\cite{DBLP:conf/icalp/YinFHHT14}
shows that these decidability results cannot be essentially extended. 

Fu's result for branching bisimilarity on normed BPA is substantially
stronger than the previous results dealing with so called totally
normed
BPA~\cite{DBLP:conf/cav/Huttel91,DBLP:journals/iandc/CaucalHT95}; 
the proof uses an involved tableau framework 
(being inspired by~\cite{DBLP:conf/cav/Huttel91} and related works)
and does not provide any upper complexity bound.
Regarding the lower bound, Fu notes that the construction
used by Mayr~\cite{DBLP:journals/entcs/Mayr04} 
for weak bisimilarity can be easily adapted to yield 
$\EXPTIME$-hardness also for
branching bisimilarity on normed BPA.

An important novel ingredient of the decidability proof by
Yuxi Fu can be called the \emph{class-change norm} 
(corresponding to
the
branching norm in~\cite{DBLP:conf/icalp/Fu13}): 
while the standard norm counts
all the steps when a configuration is reduced to the empty one,
the class-change norm only counts the steps 
that change the current equivalence-class. 
It is not clear how to
compute this norm directly but equivalent configurations $\alpha\sim\beta$
must agree on this norm.
Another useful fact (also observed by Fu) is that the relation
of $\alpha\gamma$ and $\beta\gamma$ (either
$\alpha\gamma\sim\beta\gamma$ or $\alpha\gamma\not\sim\beta\gamma$) is
determined solely 
by the redundant variables w.r.t. $\gamma$, i.e. by those $X$
for which $X\gamma\sim \gamma$, independently of the string $\gamma$
itself.

\textbf{Our contribution} is based on introducing 
the \emph{relative prime decomposition} of configurations; unlike the prime decomposition of variables as,
e.g., in the case of bisimilarity of normed 
BPA~\cite{DBLP:journals/tcs/HirshfeldJM96}, we introduce a decomposition
related to each possible set $R$ of redundant variables (representing
respective suffixes $\gamma$). The (relative) equivalence
$\alpha\sim_R\beta$ can be then
replaced with the equality of prime decompositions
$\PD_R(\alpha)=\PD_R(\beta)$.
We suggest a nondeterministic exponential-time algorithm that guesses
the appropriate decompositions and then verifies their correctness in
the sense that the equality of the decompositions w.r.t. the guess is
indeed a branching bisimulation.
We thus place the branching bisimilarity of normed BPA configurations in
$\NEXPTIME$.

Fu~\cite{DBLP:conf/icalp/Fu13} has also shown that the respective
``regularity problem'' (given a normed BPA configuration, is it branching
bisimilar to some unspecified finite-state process?) is decidable. Our
approach places this problem in $\NEXPTIME$ as well.

\emph{Remark.}
It seems natural to look for a deterministic exponential-time
algorithm that would (deterministically) compute the decompositions, 
e.g., by proceeding via a certain series of decreasing overapproximations. 
Nevertheless, this question is left open here. 

\emph{Structure of the paper.}
In Section~\ref{sec:prelim} we define the used notions and state the
result. Section~\ref{sec:udp} deals with the class-change norm and
related observations that essentially already appeared 
in~\cite{DBLP:conf/icalp/Fu13}. 
Section~\ref{sec:relative-prime-decomposition} then introduces the relative prime
decompositions of configurations and proves their uniqueness.
Section~\ref{sec:bases} defines the branching bisimilarity bases and shows
how their consistency can be checked.
Finally Section~\ref{sec:algor} contains algorithms for the bisimilarity and regularity problems.

\section{Preliminaries, and statements of results}\label{sec:prelim}

We first recall a general definition of branching bisimilarity, 
which is then applied to BPA configurations.
Throughout the text we add some remarks (in italic) related to weak
bisimilarity; they are not needed for understanding the presented result.

By $\A^*$ we denote the set of finite sequences of elements of the
set $\A$. By $\eps$ we denote the empty sequence, and by $|w|$ the
length of $w\in\A^*$.
We put $\N=\{0,1,2,\dots\}$.

\subsubsection*{Labelled transition systems.}

 A \emph{labelled transition system}, an \emph{LTS} for short,
 is a tuple
$$\calL=(\calS,\act, (\trans{a})_{a\in\act})$$
 where $\calS$ is a set of states (at most countable in our case),
 $\act$ is a set of \emph{actions} (finite in our case), and
 $\trans{a}\subseteq \calS\times\calS$ is a set of \emph{transitions
 labelled with} $a$.
 We reserve the symbol
 \begin{center} 
 $\tau$ for  the (only) \emph{silent action};
\end{center}
 the  \emph{visible actions} are the elements of
$\actvis=\act\smallsetminus\{\tau\}$.
(If $\tau\in\act$, then  $\act=\actvis\cup\{\tau\}$, otherwise
$\act=\actvis$.)

We write $s\trans{a}t$ rather than $(s,t)\in\trans{a}$ (for $a\in\act$),
 and we define
 $s\trans{w}t$ for $w\in\act^*$ inductively:
 $s\trans{\eps}s$; if $s\trans{a}s'$ and
 $s'\trans{u}t$, then $s\trans{au}t$. By $s\trans{w}t$ we sometimes
also refer to a concrete respective path from $s$ to $t$ in $\calL$.

\subsubsection*{Branching bisimilarity.}
Given an LTS $\calL=(\calS,\act, (\trans{a})_{a\in\act})$,
a \emph{symmetric} relation $\calB\subseteq \calS\times\calS$ is a \emph{branching
 bisimulation} if 
 for any $(s,t)\in\calB$, $a\in\act$, and $s'\in\calS$ such that 
  $s\trans{a}s'$
 we have:
 \begin{itemize}
	 \item		 
$a=\tau$ and $(s',t)\in \calB$, or
\item
($a\in\actvis\cup\{\tau\}$ and)
 there is a sequence
 $t=t_0\trans{\tau}t_1\trans{\tau}\cdots\trans{\tau}t_k\trans{a}t'$
(for some $k\geq 0$) such that $(s',t')\in\calB$ and
$(s,t_i)\in\calB$ for all $i\in\{1,2,\dots,k\}$.
\end{itemize}
%
%
%
 By $s\sim t$, to be read as ``states $s,t$ are \emph{branching bisimilar}'',
 we denote that there is a branching bisimulation containing $(s,t)$.
We can easily verify the standard facts that $\sim$ is the union of all branching bisimulations,
and thus the maximal branching bisimulation, and that $\sim$ 
is an equivalence
 relation.
 
 \begin{remark} 
\emph{Weak bisimulations}, and \emph{weak bisimilarity}, are defined as above but
 we allow sequences with ``post'' $\tau$-transitions, like
 $t=t_0\trans{\tau}t_1\trans{\tau}\cdots\trans{\tau}t_k\trans{a}t'_0\trans{\tau}t'_1\trans{\tau}\cdots\trans{\tau}t'_\ell=t'$
 (for $\ell\geq 0$), and we only require that the final pair 
 $(s',t')$ belongs to $\calB$.
We denote the weak bisimilarity by $\approx$.
Any branching bisimulation is thus a weak bisimulation;
hence branching bisimilarity $\sim$ is finer than weak bisimilarity
$\approx$.
They coincide in the case with no silent action, in which case we
use the notion of (strong) bisimilarity.
In the system given by the following transitions we have $s_1\not\sim
s_2$ but $s_1\approx s_2$:
$s_1\trans{\tau}s_2$, $s_1\trans{a}s_5$, $s_2\trans{\tau}s_3$,  
$s_3\trans{a}s_5$,
$s_2\trans{a}s_4$, $s_4\trans{b}s_5$.
\end{remark}

\subsubsection*{Normed BPA systems.}

 A~\emph{BPA system} is given by a context-free grammar in Greibach
 normal form, with no starting variable (nonterminal). We denote it as
$$\calG=(\var,\act,\rules)$$
where
 $\var$ is a finite set of \emph{variables} (or nonterminals), $\act$
 is a finite set of \emph{actions} (or terminals), which can contain
 the \emph{silent action} $\tau$, and $\rules$ is a finite set of
 \emph{rules} of the form
 $A\trans{a}\alpha$ where $A\in\var$, $a\in\act$,
 $\alpha\in\var^*$. 

A BPA system $\calG=(\var,\act,\rules)$
has the associated LTS 
$$\calL_\calG=(\var^*,\act,(\trans{a})_{a\in\act})$$
where 
each rule 
 $A\trans{a}\alpha$ in $\rules$ induces the transitions
$A\beta\trans{a}\alpha\beta$ for all $\beta\in\var^*$.
The states of $\calL_\calG$, i.e. the strings of variables,
are called \emph{configurations} or \emph{processes}.

A \emph{variable} 
$A\in\var$
is \emph{normed}
if there is
$w\in\act^*$ such that $A\trans{w}\eps$. 
A \emph{BPA system}  $\calG=(\var,\act,\rules)$ is \emph{normed},
i.e. an  \emph{nBPA system}, 
if each $A\in\var$ is
normed.

\subsubsection*{Branching bisimilarity problem
for nBPA.}

By the \emph{branching bisimilarity problem for normed BPA} we mean the
decision problem specified as follows:
\begin{quote}
	\emph{Instance:} a normed BPA system
	$\calG=(\var,\act,\rules)$ and variables
$A,B\in\var$.
\\
	\emph{Question:} Is $A\sim B$ in $\calL_\calG$~?
\end{quote}
The variant with general configurations $\alpha,\beta\in\var^*$ in the
instances, asking whether  $\alpha\sim\beta$,
can be easily reduced to the above variant with variables $A,B$.

\subsubsection*{Semantic finitess (or regularity) problem.}
In our context, the \emph{regularity problem} is specified as follows:
\begin{quote}
	\emph{Instance:} a normed BPA system
	$\calG=(\var,\act,\rules)$ and 
$\alpha\in\var^*$.
\\
	\emph{Question:} Is $\alpha\sim s$ for a state 
	$s$ in some finite LTS\,?
\end{quote}
As usual, when comparing states in two different LTSs $\calL_1,\calL_2$, we
implicitly refer to the LTS 
arising as the disjoint union of $\calL_1$ and $\calL_2$.

\subsubsection*{Results.}
The next two theorems capture our main results.

\begin{theorem}\label{th:brbisnexptime}
The branching bisimilarity problem for normed BPA is in $\NEXPTIME$.
\end{theorem}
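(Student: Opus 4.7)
The plan is to follow the ``guess a basis, then verify its consistency'' paradigm adapted to branching bisimilarity through the ingredients advertised in the introduction. First, I fix the given nBPA system and, for each subset $R\subseteq\var$, define a relative equivalence $\sim_R$ on $\var^*$ by $\alpha\sim_R\beta$ iff $\alpha\gamma\sim\beta\gamma$ for some (equivalently every) configuration $\gamma$ whose set of redundant variables is exactly $R$; well-definedness is ensured by Fu's observation recalled above. I then introduce, for each such $R$, the notion of an $R$-prime configuration (one that cannot be split as $\alpha'\alpha''$ with both factors nontrivial modulo $\sim_R$) and prove that every configuration admits a unique factorization $\PD_R(\alpha)$ into $R$-primes up to $\sim_R$. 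The key tool here is the class-change norm: since it is additive over concatenation and preserved by $\sim$, a Levi-style cancellation lemma, combined with induction on this norm, yields existence and uniqueness of $\PD_R$, in analogy with the classical prime decomposition for strong bisimilarity of normed BPA, but now parameterized by $R$.

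Next, I define a \emph{branching bisimilarity basis} as a table assigning to each pair $(R,X)\in 2^\var\times\var$ a claimed value of $\PD_R(X)$, extended multiplicatively to all configurations. I expect to bound the length of each $\PD_R(X)$ exponentially in the input, so that the basis has size exponential in the input. I then formulate local consistency conditions on such a table: for every rule $X\trans{a}\alpha$ and every relevant $R$, the branching bisimulation transfer condition must hold at the level of the guessed decompositions, and the way $R$ updates when a prime factor is peeled off must be correctly reflected across the table. Soundness will say that any consistent basis induces a branching bisimulation by declaring $\alpha\sim\beta$ iff $\PD_\emptyset(\alpha)=\PD_\emptyset(\beta)$, and completeness will say that the true decomposition table is consistent; both are proved by induction on class-change norm.

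Granting this, the $\NEXPTIME$ upper bound is immediate: nondeterministically guess a basis of exponential size, deterministically verify consistency in time polynomial in the basis (hence exponential in the input), and accept iff the guessed $\PD_\emptyset(A)$ and $\PD_\emptyset(B)$ coincide.

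The main obstacles I anticipate are twofold. First, unique relative prime factorization is more delicate for branching than for strong bisimilarity, because silent transitions can be absorbed into stutter segments; the cancellation lemma will require exploiting the branching stutter clause together with strict decrease of the class-change norm to preclude infinite descent, and care is needed to isolate the role of $R$ throughout. Second, the local consistency conditions must be simultaneously strong enough to entail a global branching bisimulation and checkable in polynomial time in the basis; in particular, the $\tau$-sequences through intermediate states demanded by the branching definition must be certifiable using only the guessed $\PD_R$ tables, which requires a careful choice of invariants linking the tables across different sets $R$.
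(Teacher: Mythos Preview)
Your plan is essentially the paper's own proof: relative equivalences $\sim_R$, unique $R$-prime decompositions via the class-change norm, an exponential-size basis recording the guessed $\PD_R(X)$, local consistency conditions whose soundness and completeness yield the $\NEXPTIME$ algorithm. One caution on a point you state imprecisely: the class-change norm is not additive over concatenation in the naive sense---one has $\bnorm{\alpha\beta}_R=\bnorm{\alpha}_{R'}+\bnorm{\beta}_R$ with $R'=\red(\beta,R)$, so the prefix norm is taken relative to the shifted set $R'$; keep this relativized additivity straight when you run the cancellation/uniqueness argument.
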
	

\begin{theorem}\label{th:regulnexptime}
The regularity problem for normed BPA (w.r.t. branching bisimilarity)
is in $\NEXPTIME$.
\end{theorem}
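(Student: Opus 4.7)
The plan is to leverage Theorem~\ref{th:brbisnexptime} and reduce regularity to a nondeterministic search for an exponentially small finite-state witness. The starting observation is standard: $\alpha$ is branching bisimilar to a state in some finite LTS if and only if the set of bisimilarity classes of configurations reachable from $\alpha$ in $\calL_\calG$ is finite; in that case the quotient LTS on these classes serves as a canonical witness, and it suffices to bound its size.

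The first main step is to prove that, whenever $\alpha$ is regular, the number of reachable bisimilarity classes is at most $2^{p(\|\calG\|)}$ for some fixed polynomial $p$. This should follow from the uniqueness of relative prime decompositions established in Section~\ref{sec:relative-prime-decomposition}: each reachable configuration $\beta$ is determined up to $\sim$ by a pair $(R,\PD_R(\beta))$ where $R\subseteq\var$ is the set of variables redundant with respect to the relevant suffix, so the number of reachable classes is bounded by the number of distinct such decompositions. If regularity holds, the lengths of these decomposition strings must be bounded by a polynomial; otherwise a pumping argument over the finitely many redundancy sets and the primes appearing along a reduction sequence would yield an infinite chain of pairwise inequivalent reachable configurations, contradicting regularity via the class-change norm. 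A polynomial bound on the length, combined with the (at most exponential) cardinality of the prime alphabet, then yields the desired exponential bound on the number of classes.

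The second step is the algorithm. It guesses a finite LTS $\calF$ of at most $2^{p(\|\calG\|)}$ states together with a distinguished state $q_0$, then guesses a branching bisimilarity base for the disjoint union $\calL_\calG\uplus\calF$ that includes the claim $\alpha\sim q_0$, and finally verifies consistency of the base by the method of Section~\ref{sec:bases}. Since a finite LTS embeds as a degenerate BPA system (each state is a variable, with rules mirroring the transitions and rewriting to $\eps$), the consistency check transfers verbatim to the combined system. The entire guess has exponential size and the verification runs in nondeterministic exponential time, giving the required bound.

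The main obstacle, and where I would spend most effort, is the polynomial bound on the decomposition lengths of reachable configurations in the regular case. For strong bisimilarity on normed BPA the analogous bound follows from classical norm arguments, but in the branching setting the class-change norm is not directly computable and interacts delicately with the redundancy sets $R$. One must therefore formulate the pumping argument entirely in the language of relative prime decompositions: show that an excessively long $\PD_R(\beta)$ for a reachable $\beta$ forces the existence of two reachable configurations whose decompositions are strictly ordered (so that one dominates a prefix of the other) and then derive from this an infinite descending chain of reachable classes under the class-change norm, contradicting finiteness. Pinning down the right ordering and controlling how $R$ evolves under reachability is the technical heart of the argument.
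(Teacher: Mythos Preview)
The polynomial bound on decomposition lengths that you identify as the heart of the argument is false. Take variables $A_0,\ldots,A_n$ with rules $A_i\trans{\tau}A_{i-1}A_{i-1}$ for $i\geq 1$ and $A_0\trans{a}\eps$. One checks inductively that $A_i\sim A_0^{2^i}$, that $A_0$ is the only $\emptyset$-prime, and that $\red(A_0^k)=\emptyset$ for every $k$; hence $\PD_\emptyset(A_n)=A_0^{2^n}$, a decomposition of exponential length in a grammar of size $O(n)$. Yet $A_n$ is regular: the reachable $\sim$-classes are exactly $[A_0^k]$ for $0\leq k\leq 2^n$. So the length bound fails already on the starting configuration, and no pumping argument can rescue it; an exponential number of reachable classes is the best one can hope for, and this example shows that bound is tight.

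Even granting an exponential bound on the size of the witness LTS, your verification step overshoots the target complexity. Embedding a finite LTS with $2^{p(n)}$ states as fresh BPA variables produces a combined system with exponentially many variables; a base for it has $\dom(\B)$ ranging over subsets of this enlarged variable set, hence of potentially doubly-exponential cardinality, so guessing and checking such a base lands you in $2$-$\NEXPTIME$ rather than $\NEXPTIME$. The paper avoids both problems by never materialising the finite witness. It guesses a consistent base $\B$ for $\calG$ alone and then decides whether $\pdreach^\B(\alpha)=\{\PD^\B(\beta)\mid\alpha\trans{w}\beta\textnormal{ for some }w\}$ is finite \emph{without} bounding its cardinality, via a syntactic \emph{PD-loop} criterion: a pair $(A,R)$ is a PD-loop if $A\trans{w}A\beta$ for some $w,\beta$ with $\PD^\B_R(\beta)\neq\eps$ and $\red^\B(\beta,R)=R$, and $\pdreach^\B(\alpha)$ is infinite iff some PD-loop is reachable from $\alpha$. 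This is detected in exponential time on a graph whose vertices are the pairs $(A,R)\in\var\times\dom(\B)$.
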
	

We prove the theorems in the following sections; in the rest 
of this section we recall some
facts about the standard norm, and we provide an example.

\subsubsection*{Standard norm.}
Given a
normed BPA system $\calG=(\var,\act,\rules)$, 
the norm $\norm{\alpha}$ of  
$\alpha\in\var^*$ 
is the length $|w|$ of a shortest
$w\in\act^*$ such that $\alpha\trans{w}\eps$.
(Note that the silent steps
$\trans{\tau}$ are also counted.)
%
A \emph{transition} $\alpha\trans{a}\beta$ is \emph{norm-reducing} if
$\norm{\alpha}>\norm{\beta}$, in which case
$\norm{\beta}=\norm{\alpha}{-}1$,
in fact.

The facts captured by the next proposition are standard and easy; they
also entail that we can check in polynomial time whether a BPA system
is normed.

\begin{proposition}
\hfill\\	
(1) $\norm{\eps}=0$. 
\\
(2) $\norm{\alpha\beta}= \norm{\alpha}+\norm{\beta}$.
\\
(3) $\norm{A}=1+\norm{\alpha}$ for a (norm-reducing) rule
$A\trans{a}\alpha$.
\\
(4) There is a polynomial-time algorithm (based on dynamic
programming) that  computes $\norm{A}$ for each
$A\in\var$ (when given 
$\calG=(\var,\act,\rules)$).
\\
(5) The 
values $\norm{A}$ are exponentially bounded (in the size of
$\calG$).
\end{proposition}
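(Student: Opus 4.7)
The plan is to dispatch the five items in order, leveraging the fact that BPA rules rewrite only the \emph{leftmost} variable.

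Part (1) is immediate from $\eps\trans{\eps}\eps$ with the empty word of length $0$. For (2), the key structural observation is that any path $\alpha\beta\trans{w}\eps$ factors uniquely as $\alpha\beta\trans{w_1}\beta\trans{w_2}\eps$: during the prefix $w_1$ the suffix $\beta$ is never touched because rewriting acts on the leftmost variable, and once the $\alpha$-part is gone we are left reducing $\beta$. Hence every witness for $\norm{\alpha\beta}$ splits into witnesses for $\norm{\alpha}$ and $\norm{\beta}$, giving $\norm{\alpha\beta}\geq \norm{\alpha}+\norm{\beta}$; the converse inequality follows by concatenating shortest witnesses. Part (3) then drops out: a shortest reduction from $A$ must begin with some rule $A\trans{b}\gamma$, so by (2) we have $\norm{A}=1+\min\{\norm{\gamma}\mid A\trans{b}\gamma\in\rules\}$, and a norm-reducing rule is precisely one that attains this minimum.

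For (4), I would use a Dijkstra/Bellman-Ford style fixpoint: initialise $\norm{A}:=\infty$, then repeatedly pick the smallest value of the form $1+\sum_{X\in\alpha}\norm{X}$ over rules $A\trans{a}\alpha$ whose right-hand sides have only finite-norm variables, and assign this value to $\norm{A}$ if it is smaller than the current one. By (3) each variable stabilises after at most one assignment to its true norm; picking at each round the globally smallest such candidate guarantees that the chosen value is correct. With $|\var|$ variables and $|\rules|$ rules, only $O(|\var|)$ rounds are needed, and each round scans the rule set, giving a polynomial-time procedure (arithmetic is done on integers of polynomial bit-length, see below).

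Finally, (5) is obtained by induction on the order in which variables stabilise in (4). Let $k$ bound the length of the right-hand sides of rules. If the first $i$ variables (in order of increasing norm) have norm at most $N_i$, then by (3) the next one has norm at most $1+k\cdot N_i$, so $N_{i+1}\leq 1+kN_i$. Unrolling gives $N_{|\var|}\leq (k+1)^{|\var|}$, which is singly exponential in the size of $\calG$. The only real delicacy is in (4): since the values themselves may be exponential one must store them as binary integers of polynomial bit-length and perform the comparisons and sums accordingly, but this is routine and does not affect the polynomial time bound.
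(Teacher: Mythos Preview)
Your proof is correct. The paper does not actually prove this proposition; it merely states it as ``standard and easy'' and moves on, so there is nothing to compare against beyond noting that your argument supplies exactly the routine details the authors omit. The factoring in (2) via leftmost rewriting, the Dijkstra-style fixpoint in (4), and the recurrence $N_{i+1}\leq 1+kN_i$ in (5) are the standard justifications, and your handling of the bit-length issue in (4) is appropriate.
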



\begin{example}\label{ex:running}
Let
$\var=\{S_1,S_2,S_3\} \cup
	\{M_R\mid \emptyset\neq R \subseteq \{1, 2, 3\}\}
	\cup \{A,B,C,D\}$ and
$\act=\{a_1, a_2, a_3\}\cup\{\tau\}$.
We write just $M_{13}$ instead of $M_{\{1,3\}}$, etc.
Let $\rules$ be the set containing the following rules:
\begin{quote}
$S_1 \trans{a_1} \eps, \ S_1 \trans{\tau} \eps,
\ S_2 \trans{a_2} \eps,  \ S_2 \trans{\tau} \eps,
\ S_3 \trans{a_3} \eps,  \ S_3 \trans{\tau} \eps,$
\end{quote}
\begin{quote}
$M_R \trans{a_i} M_R,
\ M_R \trans{\tau} \eps$
for all nonempty $R \subseteq \{1, 2, 3\}$ and $i \in
R$, 
\end{quote}
\begin{quote}
	$A \trans{\tau} S_1 M_3, 
\ B \trans{a_1} C, 
\ B \trans{\tau} M_2 M_3,
\ C \trans{a_1}  C,
\ C \trans{\tau} M_3 M_2.$
\end{quote}
Here 
$\norm{S_i} = 1, \norm{M_R} = 1, \norm{A} = \norm{B} = \norm{C} = 3$.
We can check that $M_{23} \not \sim M_3 M_2$, and
$S_2 M_{23} \sim M_{23} \sim M_3 M_{23}$, though
$\norm{M_{23}}\neq\norm{M_3M_{23}}$.
\end{example}

\section{Class-change norm, and relative equivalences}\label{sec:udp}

In this section we recall some standard facts, the special norm introduced by Y.
Fu (given by a slightly modified definition here), and some observations presented already in~\cite{DBLP:conf/icalp/Fu13}. 
The main notions introduced here are the \emph{class-change norm} 
and the \emph{relative equivalences} $\sim_R$ for subsets $R$ of
the set of variables.
We implicitly refer to a given normed BPA system
$\calG=(\var,\act,\rules)$, though some claims hold more
generally. We give all proofs, to be self-contained.

\subsubsection*{Congruence property, and silent variables.}

\begin{proposition}\label{prop:congruence}
If $\alpha\sim\beta$ and $\gamma\sim\delta$,
then $\alpha\gamma\sim\beta\delta$.
\end{proposition}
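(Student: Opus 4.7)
The plan is to exhibit a branching bisimulation containing the pair $(\alpha\gamma,\beta\delta)$. The natural candidate is
\[
\mathcal{B} = \{(\alpha'\gamma',\beta'\delta') \mid \alpha'\sim\beta' \text{ and } \gamma'\sim\delta'\}.
\]
The relation is symmetric, so I only need to match transitions from the left component. The argument splits according to whether the leftmost variable rewritten lies in $\alpha'$, or, when $\alpha'=\eps$, in $\gamma'$.

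If $\alpha'\neq\eps$, then a transition $\alpha'\gamma'\trans{a}\alpha''\gamma'$ is matched by appending $\delta'$ to each state of the matching path supplied by $\alpha'\sim\beta'$; since $\gamma'\sim\delta'$ is preserved on the right, all intermediate pairs stay in $\mathcal{B}$. This subcase is entirely routine.

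The delicate case is $\alpha'=\eps$, where the transition is $\gamma'\trans{a}\gamma''$ and I have $\beta'\sim\eps$. Here one must first ``flush'' $\beta'$ by a $\tau$-path on the right side and then replay a match for the $\gamma'$-transition using $\gamma'\sim\delta'$. This requires the auxiliary fact that a normed configuration with $\beta'\sim\eps$ admits a $\tau$-path $\beta'=\beta_0\trans{\tau}\beta_1\trans{\tau}\cdots\trans{\tau}\beta_m=\eps$ with every $\beta_i\sim\eps$. I would prove this by induction on $\norm{\beta'}$: any transition $\beta'\trans{a}\beta''$ must be matchable from $\eps$, which has no transitions, so necessarily $a=\tau$ and $\beta''\sim\eps$; normedness ensures that some such $\tau$-transition is norm-reducing, and the induction hypothesis finishes the path.

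Given this lemma, the match from $\beta'\delta'$ for the transition $\gamma'\trans{a}\gamma''$ is the concatenation of the $\tau$-path $\beta_i\delta'$ (ending at $\delta'$) with the matching path for $\gamma'\trans{a}\gamma''$ provided by $\gamma'\sim\delta'$; the intermediate configurations $\beta_i\delta'$ pair with $\alpha'\gamma'=\gamma'$ inside $\mathcal{B}$ because $\beta_i\sim\eps=\alpha'$ and $\gamma'\sim\delta'$, and the remaining intermediate configurations come from the $\gamma'\sim\delta'$ match, which already respects $\mathcal{B}$. The chief obstacle is precisely the branching-bisimulation requirement that all intermediate $\tau$-states be related to the source, which is what forces the auxiliary $\tau$-path lemma; once that lemma is in hand, the verification is routine.
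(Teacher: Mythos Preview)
Your argument is correct. The only organizational difference from the paper is that the paper does not work with the single relation
\[
\mathcal{B}=\{(\alpha'\gamma',\beta'\delta')\mid \alpha'\sim\beta',\ \gamma'\sim\delta'\}
\]
but instead verifies two simpler relations separately,
\[
\mathcal{B}_1=\{(\alpha\gamma,\alpha\delta)\mid \gamma\sim\delta\}
\quad\text{and}\quad
\mathcal{B}_2=\{(\alpha\gamma,\beta\gamma)\mid \alpha\sim\beta\},
\]
and then concludes $\alpha\gamma\sim\beta\gamma\sim\beta\delta$ by transitivity of~$\sim$. The relation $\mathcal{B}_1$ is genuinely trivial (the prefix is identical on both sides, and when it is empty one is left with a pair already in~$\sim$), so the paper's split isolates all the work in~$\mathcal{B}_2$. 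That work is exactly the ``flushing'' step you identified: when $\alpha=\eps$ one has $\beta\sim\eps$, and normedness is used to produce a $\tau$-path $\beta\trans{\tau}\cdots\trans{\tau}\eps$ through states equivalent to~$\eps$. The paper only calls this ``slightly more subtle'' without spelling it out; you have made the dependence on normedness explicit, which is the substantive point (and the paper remarks immediately afterwards that this direction fails without normedness). So the two proofs share the same key idea; the paper's two-relation split just makes the easy half visibly easy.
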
	

\begin{proof}
It is trivial to check that 
$\{(\alpha\gamma,\alpha\delta)\mid \alpha,\gamma,\delta\in\var^*,
\gamma\sim\delta\}$ is a branching
bisimulation. Hence $\gamma\sim\delta$ implies 
$\alpha\gamma\sim\alpha\delta$.
Slightly more subtle is to check that 
$\{(\alpha\gamma,\beta\gamma)\mid \alpha,\beta,\gamma\in\var^*, \alpha\sim\beta\}$ 
is a branching bisimulation.
\qed
\end{proof}


The second point in the above proof depends on our normedness
assumption.
(E.g., for the unnormed variable  $A$ with the only rule
$A\trans{\tau}A$
we have $\eps\sim A$ but $\delta\not\sim A\delta$
if $\delta\not\sim\eps$, since $A\delta\sim\eps$.)

We say that $A \in \var$ is a \emph{silent variable} if $A\trans{w}\alpha$ implies 
$w\in \{\tau\}^*$ (i.e., we can never perform a visible
action  when starting from $A$);
let $\varsil$ consist of all silent variables. 
We observe that  $\eps\sim\alpha$ iff $\alpha\in(\varsil)^*$ (in our
normed case); hence $\alpha\in(\varsil)^*$ implies 
$\gamma\sim\alpha\gamma$ for any $\gamma\in\var^*$.

\medskip

\textbf{Convention} (\emph{on silent variables}).
Since the silent variables can be determined by a straightforward
polynomial algorithm and they can be removed from any
$\beta\in\var^*$ without changing its equivalence class, we further
assume that our normed BPA systems have no silent variables.


\subsubsection*{Class-change norm.}

Example~\ref{ex:running} showed that
we can have $\alpha\sim\beta$ though $\norm{\alpha}\neq\norm{\beta}$.
We now define a norm for which this cannot happen.


A transition $\alpha\trans{a}\beta$ is \emph{class-changing}
if $\alpha\not\sim\beta$. The \emph{cc-length} of a path
$\alpha\trans{w}\beta$ is the number of class-changing transitions in
the path. We note that
the cc-length of $\alpha\trans{u}\beta\trans{v}\gamma$ is
the sum of the cc-lengths of $\alpha\trans{u}\beta$
and $\beta\trans{v}\gamma$.
\begin{center}
	The \emph{class-change norm} (or the \emph{cc-norm})
	of $\alpha\in\var^*$, 
denoted $\bnorm{\alpha}$,
\end{center}
is the minimum of   
the cc-lengths of paths $\alpha\trans{w}\eps$.

Any shortest path $\alpha\trans{w}\eps$ such that 
its cc-length is equal to $\bnorm{\alpha}$ is called 
\begin{center}
a \emph{witness
path for} $\alpha$. 
\end{center}
We observe that 
if  $\alpha\trans{u}\beta\trans{v}\eps$ is a witness path for $\alpha$
then $\beta\trans{v}\eps$ is a witness path for $\beta$.

\begin{remark}
Consider the rules 
$A\trans{\tau} A'$,
$A'\trans{\tau}\varepsilon$,
$A\trans{a}\varepsilon$,
$B\trans{a}B$,
$B\trans{b}\varepsilon$.
Here
$AB \sim A'B \sim B$, and the only witness path for $AB$ is 
$AB\trans{a}B\trans{b}\varepsilon$, with the cc-length $1$; hence
$\bnorm{AB}=1$ (while $\norm{AB}=2$). The branching norm 
in~\cite{DBLP:conf/icalp/Fu13} would be also $1$, but witnessed
by  $AB\trans{\tau}A'B\trans{\tau}B\trans{b}\varepsilon$.
\end{remark}

A transition $\alpha\trans{a}\beta$ is 
\emph{class-change-norm-reducing}, \emph{ccn-reducing} for short,
if  $\bnorm{\alpha}> \bnorm{\beta}$, i.e., if  
$\bnorm{\beta} = \bnorm{\alpha} {-} 1$.

\begin{proposition}\label{prop:bnormnecessary}
\hfill\\
	(1)  $\bnorm{\alpha}\leq\norm{\alpha}$.
\\
(2) If $\alpha\sim\beta$, then $\bnorm{\alpha}=\bnorm{\beta}$.
\end{proposition}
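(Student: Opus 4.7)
The plan for (1) is a one-line observation: any path $\alpha\trans{w}\eps$ has cc-length at most $|w|$, since the class-changing steps form a subset of all steps. Picking a shortest such path, of length $\norm{\alpha}$, gives $\bnorm{\alpha}\leq\norm{\alpha}$ directly from the definition of the cc-norm.

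For part (2), by symmetry it suffices to show $\bnorm{\beta}\leq\bnorm{\alpha}$. I would fix a witness path $\alpha=\alpha_0\trans{a_1}\alpha_1\trans{a_2}\cdots\trans{a_n}\alpha_n=\eps$ of cc-length $\bnorm{\alpha}$ and inductively construct configurations $\beta=\beta_0,\beta_1,\ldots,\beta_n$ together with paths $\beta_{i-1}\trans{u_i}\beta_i$ such that $\alpha_i\sim\beta_i$ throughout, and the cc-length of $\beta_{i-1}\trans{u_i}\beta_i$ equals the cc-contribution ($0$ or $1$) of the single $\alpha$-step $\alpha_{i-1}\trans{a_i}\alpha_i$.

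The inductive step invokes the two clauses of branching bisimulation applied to $\alpha_{i-1}\sim\beta_{i-1}$. When clause (a) applies (so $a_i=\tau$ and $\alpha_i\sim\beta_{i-1}$), I take $\beta_i=\beta_{i-1}$ with an empty path, matching the non-class-changing $\tau$-step with a $0$ contribution. Otherwise clause (b) yields $\beta_{i-1}=t_0\trans{\tau}\cdots\trans{\tau}t_k\trans{a_i}t'$ with $\alpha_{i-1}\sim t_j$ for all $j\leq k$ and $\alpha_i\sim t'$. Setting $\beta_i=t'$, the intermediate $\tau$-steps are non-class-changing because every $t_j\sim\alpha_{i-1}$, while the final $a_i$-step is class-changing iff $\alpha_{i-1}\not\sim\alpha_i$ (since $t_k\sim\alpha_{i-1}$ and $t'\sim\alpha_i$). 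In both branches the cc-contribution on the $\beta$-side matches that on the $\alpha$-side.

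Concatenating yields a path from $\beta$ to $\beta_n$ of total cc-length $\bnorm{\alpha}$, with $\beta_n\sim\eps$. Invoking the earlier convention that there are no silent variables (so $\gamma\sim\eps$ iff $\gamma=\eps$), we conclude $\beta_n=\eps$, and hence $\bnorm{\beta}\leq\bnorm{\alpha}$. The main delicate point is the per-step cc-matching via the two clauses; upgrading ``$\beta_n\sim\eps$'' to ``$\beta_n=\eps$'' crucially relies on the silent-variable convention adopted earlier, without which the argument would break.
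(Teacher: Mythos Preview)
Your proof is correct and follows essentially the same idea as the paper's: match the witness path for $\alpha$ step by step via the branching-bisimulation clauses, observing that each response preserves the cc-contribution (the intermediate $\tau$-steps stay in the class of $\alpha_{i-1}$, and the final step changes class iff $\alpha_{i-1}\not\sim\alpha_i$), and use the no-silent-variables convention at the end to get $\beta_n=\eps$. The only difference is cosmetic: the paper phrases the same step-matching as a minimal-counterexample argument (induction on the length of the witness path for $\alpha$) rather than your direct forward construction; your version is arguably cleaner since it avoids the extra case analysis showing that the first step of the minimal counterexample must be class-changing.
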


\begin{proof}
(1) is trivial.
\\
(2): Informally speaking, any class-changing transition must be matched
by a response that finishes by a corresponding class-change,
and thus the claim is intuitively
clear. Formally we
suppose a counterexample 
\begin{center}
$\alpha\sim\beta$,
$\bnorm{\alpha}<\bnorm{\beta}$,
\end{center}
where 
$\alpha$ has the shortest possible witness path $\alpha\trans{w}\eps$.
We cannot have 
$\alpha=\eps$, since in this case $\alpha=\beta = \eps$ 
(recall that we have excluded silent variables), 
and $\bnorm{\alpha}=\bnorm{\beta}=0$.
Hence $\alpha=A \delta$,
and the path $\alpha\trans{w}\eps$ can be written as 
\begin{center}
$\alpha=A\delta\trans{a}\gamma\delta=\alpha'\trans{w'}\eps$
\end{center}
(for a rule $A \trans{a} \gamma$).
Obviously, $\alpha' = \gamma\delta$ 
has a shorter witness path than $\alpha$, and
$\bnorm{\alpha'}\leq\bnorm{\alpha} < \bnorm{\beta}$.
Our assumptions thus imply  
$\alpha'\not\sim\alpha$
(otherwise $\alpha'\sim \beta$ 
with $\bnorm{\alpha'} < \bnorm{\beta}$
would constitute a ``smaller''
counterexample);
hence 
$\bnorm{\alpha'}=\bnorm{\alpha}-1$.
Since $\alpha\sim\beta$,
there is a response to the transition
$\alpha\trans{a}\alpha'$ (i.e., to
$A\delta\trans{a}\gamma\delta$), namely a sequence
\begin{center}
$\beta=\beta_0\trans{\tau}\beta_1\trans{\tau}\cdots\trans{\tau}\beta_k\trans{a}\beta'$
\end{center}
(for some $k\geq 0$),
such that $\alpha'\sim\beta'$ and
$\alpha\sim\beta_i$ for all $i\in\{0,1,\dots,k\}$; thus 
the transitions in 
$\beta_0\trans{\tau}\beta_1\trans{\tau}\cdots\trans{\tau}\beta_k$
are not class-changing.
Hence 
$\bnorm{\beta}\leq \bnorm{\beta'}{+}1$, and thus 
$\bnorm{\beta'}\geq \bnorm{\beta}{-}1>\bnorm{\alpha}{-}1
= \bnorm{\alpha'}$;
this implies that $\alpha',\beta'$ 
(where $\alpha'\sim\beta'$ and $\bnorm{\alpha'} < \bnorm{\beta'}$)
constitute a smaller
counterexample.
\qed
\end{proof}

In Example~\ref{ex:running} we can check that
$\bnorm{S_1 M_{12}} = 1$
and
$\bnorm{M_{12} S_1} = 2$, which implies
$S_1 M_{12} \not \sim M_{12} S_1$.

\begin{remark}
The equality of class-change norms of any pair of branching bisimilar
configurations $\alpha\sim\beta$ can help us to 
restrict the potential
consistent responses
$\beta=\beta_0\trans{\tau}\beta_1\trans{\tau}\cdots\trans{\tau}\beta_k\trans{a}\beta'$
to a transition $\alpha\trans{a}\alpha'$. In particular, all
$\beta_i$ ($i\in\{0,1,\dots,k\}$) must have the same class-change
norm. 
This is one of the points, which does not hold for weak bisimilarity.
\end{remark}

\subsubsection*{Redundant variables characterize the suffixes.}

The fact of compositionality, i.e., the fact that $\sim$ is a congruence
(Prop.~\ref{prop:congruence}), naturally leads us to look for possible
decompositions, in particular for the ``prime decompositions'' of
configurations $\alpha$, as we do in 
Section~\ref{sec:relative-prime-decomposition}.
It will turn out that the decomposition of $\alpha$ might be different in
the ``suffix-context'' $\alpha\gamma$ than in the context
 $\alpha\delta$ when $\gamma\not\sim\delta$.
Nevertheless, the decompositions of $\alpha$ in these two contexts 
will turn out to be the same if
$\red(\gamma)=\red(\delta)$, where for any $\beta\in\var^*$
we put
\begin{center}
$\red(\beta)=\{X\in\var\mid X\beta\sim \beta\}$.
\end{center}
\label{sec:relative-prime-decomposition}
The variables in $\red(\beta)$ are called  
the \emph{redundant variables w.r.t.} $\beta$. We observe:

\begin{proposition}\label{prop:redundprefix}
We have $\alpha\beta\sim\beta$ if, and only if, 
$\alpha\in(\red(\beta))^*$.
\end{proposition}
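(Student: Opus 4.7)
The plan is to prove the two directions separately. For the easy direction $(\Leftarrow)$, I will induct on the number of variables in $\alpha$ and apply Proposition~\ref{prop:congruence}. The base $\alpha = \eps$ is immediate. For the step, write $\alpha = Y\alpha'$ with $Y \in \red(\beta)$ and $\alpha' \in (\red(\beta))^*$; the inductive hypothesis gives $\alpha'\beta \sim \beta$, so congruence yields $Y\alpha'\beta \sim Y\beta$, and $Y\beta \sim \beta$ by definition of $\red(\beta)$, whence $\alpha\beta \sim \beta$.

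The harder direction $(\Rightarrow)$ is the heart of the argument and relies on the class-change norm from Proposition~\ref{prop:bnormnecessary}. From $\alpha\beta \sim \beta$ we obtain $\bnorm{\alpha\beta} = \bnorm{\beta}$, and I will fix a witness path for $\alpha\beta$, of cc-length exactly $\bnorm{\beta}$. Since BPA transitions always rewrite the leftmost variable, this path necessarily passes through the configuration $\beta$ (namely, at the moment $\alpha$ has just been fully reduced to the empty string) and therefore factors as $\alpha\beta \trans{w_1} \beta \trans{w_2} \eps$, with the first segment corresponding to a derivation $\alpha \trans{w_1} \eps$ executed with $\beta$ sitting on the right. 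Writing $p$ and $r$ for the cc-lengths of the two segments, additivity gives $p + r = \bnorm{\alpha\beta} = \bnorm{\beta}$, while $r \geq \bnorm{\beta}$ by the very definition of $\bnorm{\beta}$. The crucial consequence is $p = 0$: no transition during the norming of $\alpha$ is class-changing.

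From this, every intermediate configuration along the first segment is branching bisimilar to $\beta$. Writing $\alpha = X_1 X_2 \cdots X_n$, the path visits each $X_i X_{i+1} \cdots X_n \beta$ (at the moment $X_1, \ldots, X_{i-1}$ have been fully consumed), so $X_i X_{i+1} \cdots X_n \beta \sim \beta$ for every $i$. To extract $X_i \in \red(\beta)$, I will set $\gamma = X_{i+1} \cdots X_n \beta$: then $X_i \gamma \sim \gamma$ (both are bisimilar to $\beta$), and Proposition~\ref{prop:congruence} applied to $\gamma \sim \beta$ gives $X_i \gamma \sim X_i \beta$; chaining yields $X_i \beta \sim X_i \gamma \sim \gamma \sim \beta$, so $X_i \in \red(\beta)$. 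Doing this for each $i$ gives $\alpha \in (\red(\beta))^*$.

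The only non-routine ingredient is the cc-length accounting $p + r = \bnorm{\beta}$ combined with $r \geq \bnorm{\beta}$, which forces $p = 0$; this is what converts the global hypothesis $\alpha\beta \sim \beta$ into class-preservation along every single transition in the norming of $\alpha$. Everything else reduces to Proposition~\ref{prop:congruence}.
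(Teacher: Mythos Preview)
Your proof is correct. Both directions go through as written: the $(\Leftarrow)$ induction is routine, and for $(\Rightarrow)$ your witness-path accounting ($p+r=\bnorm{\beta}$ with $r\ge\bnorm{\beta}$ forcing $p=0$) is sound, as is the final congruence step extracting $X_i\in\red(\beta)$ from $X_i\gamma\sim\gamma$ and $\gamma\sim\beta$.

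The paper's argument is organized differently. For $(\Rightarrow)$ it argues by contradiction: write $\alpha=\alpha_1X\alpha_2$ with $X$ the rightmost variable not in $\red(\beta)$; the already-proved $(\Leftarrow)$ direction lets one strip $\alpha_2$, reducing to $\alpha_1X\beta\sim\beta$. Since $X\beta\not\sim\beta$, every path $X\beta\trans{u}\beta$ has positive cc-length, whence $\bnorm{\alpha_1X\beta}\ge\bnorm{X\beta}>\bnorm{\beta}$, contradicting Proposition~\ref{prop:bnormnecessary}(2). So the paper isolates a single offending variable and derives a norm inequality, whereas you run a direct argument that walks through a witness path and shows every variable is good. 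Both rest on the same observation (any path $\alpha\beta\trans{w}\eps$ must pass through $\beta$, and the cc-norm is additive along the split), but the paper's version is shorter because it avoids the per-variable congruence bookkeeping; your version has the mild advantage of being constructive and of not needing to invoke the $(\Leftarrow)$ direction inside the $(\Rightarrow)$ proof.
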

\begin{proof}
The ``if''-direction is obvious.

For the ``only-if''-direction assume 
$\alpha_1X\alpha_2\beta\sim\beta$ where 
$\alpha_2\in(\red(\beta))^*$ and $X\not\in\red(\beta)$;
hence $\alpha_1X\alpha_2\beta\sim\alpha_1X\beta\sim\beta$
and $X\beta\not\sim\beta$. This implies that the cc-length of any path
$X\beta\trans{u}\beta$ is positive, and thus
$\bnorm{\alpha_1X\beta}\geq\bnorm{X\beta}>\bnorm{\beta}$,
which excludes
$\alpha_1X\beta\sim\beta$
(since Prop.~\ref{prop:bnormnecessary}(2) implies
$\bnorm{\alpha_1X\beta}=\bnorm{\beta}$).
\qed
\end{proof}

To prepare a way for formalizing 
the above claims on decompositions,
we first relativize the
equivalence $\sim$ and the cc-norm $\bnorm{.}$ w.r.t. the suffix-contexts.

For  any $\gamma\in\var^*$ we define the relation $\sim_{\gamma}$ and 
the norm $\bnorm{.}_\gamma$ as follows:
\begin{center}
$\alpha\sim_\gamma\beta$ $\defiff$ $\alpha\gamma\sim\beta\gamma$, 
and $\bnorm{\alpha}_\gamma=_{df}\bnorm{\alpha\gamma}-\bnorm{\gamma}$.
\end{center}
%


\emph{Remark.}
In Example~\ref{ex:running} we have
$\bnorm{C}_\eps = 3$, but $\bnorm{C}_{M_2} = 2$ and $\bnorm{C}_{M_{23}} = 1$.
We can also check that
$\red(S_i) = \emptyset$, $\red(M_R) = \{M_S \mid S \subseteq R\} \cup \{S_i \mid i \in R\}$,
$\red(A) = \emptyset$, $\red(B) = \red(C) = \{S_1, M_1\}$,
$\red(S_1 M_{12}) = \red(M_{12})$.

\smallskip

We note some simple facts:

\begin{proposition}\label{prop:relnorm}
\ \hfill\\
(1)
If $\gamma\sim\delta$, then $\sim_\gamma=\sim_\delta$,
$\bnorm{.}_\gamma=\bnorm{.}_\delta$, and $\red(\gamma)=\red(\delta)$.
\\
(2) If $\alpha\sim_\gamma\beta$, then
	$\bnorm{\alpha}_\gamma=\bnorm{\beta}_\gamma$.
\\
(3)
$\bnorm{\alpha\beta}_\gamma=\bnorm{\alpha}_{\beta\gamma}+\bnorm{\beta}_\gamma$.
\\
(4)
$\alpha\gamma\sim\gamma$ iff $\alpha\in(\red(\gamma))^*$
iff $\bnorm{\alpha}_\gamma=0$;
\\
in particular, 
 $X\in\red(\gamma)$ iff $\bnorm{X}_\gamma=0$.
\\
\end{proposition}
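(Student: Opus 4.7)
The plan is to derive all four items from the already-established Propositions~\ref{prop:congruence}, \ref{prop:bnormnecessary}, and~\ref{prop:redundprefix}, together with one simple observation about paths of the form $\alpha\gamma\trans{w}\eps$.

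For~(1), I would fix $\gamma\sim\delta$ and invoke the congruence property (Prop.~\ref{prop:congruence}) to obtain $\alpha\gamma\sim\alpha\delta$ and $\beta\gamma\sim\beta\delta$ for all $\alpha,\beta$. Transitivity of $\sim$ then yields $\alpha\gamma\sim\beta\gamma$ iff $\alpha\delta\sim\beta\delta$, i.e.\ $\sim_\gamma=\sim_\delta$; the claim $\red(\gamma)=\red(\delta)$ is the special case $\alpha=X$, $\beta=\eps$. The equality $\bnorm{.}_\gamma=\bnorm{.}_\delta$ follows by applying Prop.~\ref{prop:bnormnecessary}(2) to $\alpha\gamma\sim\alpha\delta$ and $\gamma\sim\delta$ and subtracting. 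Item~(2) is then one line: $\alpha\sim_\gamma\beta$ unfolds to $\alpha\gamma\sim\beta\gamma$, so Prop.~\ref{prop:bnormnecessary}(2) gives $\bnorm{\alpha\gamma}=\bnorm{\beta\gamma}$, and subtracting $\bnorm{\gamma}$ yields $\bnorm{\alpha}_\gamma=\bnorm{\beta}_\gamma$. Item~(3) is a telescoping algebraic identity directly from the definition: $\bnorm{\alpha\beta}_\gamma=\bnorm{\alpha\beta\gamma}-\bnorm{\gamma}=(\bnorm{\alpha\beta\gamma}-\bnorm{\beta\gamma})+(\bnorm{\beta\gamma}-\bnorm{\gamma})=\bnorm{\alpha}_{\beta\gamma}+\bnorm{\beta}_\gamma$.

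The only item with real content is~(4). The equivalence $\alpha\gamma\sim\gamma$ iff $\alpha\in(\red(\gamma))^*$ is exactly Prop.~\ref{prop:redundprefix}. For the further equivalence with $\bnorm{\alpha}_\gamma=0$, the forward direction is again Prop.~\ref{prop:bnormnecessary}(2). The converse is where the one new observation is needed: because BPA rewrites the leftmost variable, every reduction $\alpha\gamma\trans{w}\eps$ factors uniquely as $\alpha\gamma\trans{w_1}\gamma\trans{w_2}\eps$ (the suffix $\gamma$ is untouched until the prefix originating from $\alpha$ has been completely consumed), and the cc-length of the whole path is the sum of the cc-lengths of its two pieces. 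Hence $\bnorm{\alpha\gamma}\geq c+\bnorm{\gamma}$, where $c$ denotes the minimum cc-length over paths $\alpha\gamma\trans{w_1}\gamma$. If $\bnorm{\alpha}_\gamma=0$ then $c=0$, so some path $\alpha\gamma\trans{w_1}\gamma$ contains no class-changing transition; every consecutive pair along it is branching bisimilar, and by transitivity $\alpha\gamma\sim\gamma$. The ``in particular'' clause is the instance $\alpha=X$.

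The only mild obstacle is the factoring step in~(4); it boils down to observing that leftmost-rewriting BPA semantics keeps $\gamma$ as a genuine suffix of every intermediate configuration until the prefix becomes $\eps$, after which one is forced to enter $\gamma$ itself. Once this is in place, the rest of the proposition is a short chain of appeals to the previously established propositions.
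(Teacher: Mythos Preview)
Your proposal is correct and follows essentially the same approach as the paper. The paper disposes of (1)--(3) in a single sentence (``follow trivially from the definitions and Prop.~\ref{prop:bnormnecessary}(2)''), and for the non-trivial direction of (4) it takes a witness path for $\alpha\gamma$, factors it as $\alpha\gamma\trans{u}\gamma\trans{v}\eps$, and observes that since the total cc-length equals $\bnorm{\gamma}$ and the suffix contributes at least $\bnorm{\gamma}$, the prefix $\alpha\gamma\trans{u}\gamma$ must contain no class-changing transitions---exactly your factoring argument, just phrased via a specific witness path rather than via the auxiliary minimum $c$.
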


\begin{proof}
(1), (2), and (3) follow trivially from the definitions and the fact that 
$\alpha\sim_\gamma\beta$ implies
$\bnorm{\alpha\gamma}=\bnorm{\beta\gamma}$ 
(by Prop.~\ref{prop:bnormnecessary}(2)).
\\
(4) partly repeats Prop.~\ref{prop:redundprefix}, and otherwise it
follows easily (by using (1), (2), (3)); in particular, if 
$\bnorm{\alpha}_\gamma=\bnorm{\alpha\gamma}-\bnorm{\gamma}=0$, then
any witness $\alpha\gamma\trans{w}\eps$ for $\alpha\gamma$ can be
written  $\alpha\gamma\trans{u}\gamma\trans{v}\eps$ where 
 $\alpha\gamma\trans{u}\gamma$ contains no class-change transition,
 which implies $\alpha\gamma\sim\gamma$.
%
\qed
\end{proof}

The next lemma says that any suffix-context is fully characterized
by the set of respective redundant variables:

\begin{lemma}\label{prop:redgammadetermines}
If $\red(\gamma)=\red(\delta)$, then 
$\sim_\gamma=\sim_\delta$ and $\bnorm{.}_\gamma=\bnorm{.}_\delta$.
\end{lemma}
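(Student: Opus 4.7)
The plan is to establish $\sim_\gamma = \sim_\delta$ first, then derive $\bnorm{.}_\gamma = \bnorm{.}_\delta$ as a short corollary. By the symmetric roles of $\gamma$ and $\delta$, it suffices to show one direction, $\alpha\gamma \sim \beta\gamma \Rightarrow \alpha\delta \sim \beta\delta$, which I would handle by exhibiting the candidate branching bisimulation
$$\calB \;:=\; {\sim}\;\cup\;\{(\mu\delta, \nu\delta) \mid \mu, \nu \in \var^*,\; \mu\gamma \sim \nu\gamma\}.$$
This $\calB$ is manifestly symmetric and contains $(\alpha\delta, \beta\delta)$, so everything hinges on checking the branching-bisimulation clause for pairs $(\mu\delta, \nu\delta)$ with $\mu\gamma \sim \nu\gamma$.

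First I would dispose of an easy sub-case: if $\mu\gamma \sim \gamma$ (equivalently $\mu \in (\red(\gamma))^*$ by Prop.~\ref{prop:redundprefix}), then $\red(\gamma) = \red(\delta)$ gives $\mu\delta \sim \delta$, and symmetrically $\nu\delta \sim \delta$, so the pair is already in $\sim$. I may therefore assume $\mu\gamma \not\sim \gamma$, which forces $\mu, \nu \neq \eps$. Now consider any transition $\mu\delta \trans{a} \mu'\delta$ (necessarily rewriting the leftmost variable of $\mu$) and the parallel $\mu\gamma \trans{a} \mu'\gamma$; the assumption $\mu\gamma \sim \nu\gamma$ supplies a matching response
$$\nu\gamma \;=\; \tau_0 \trans{\tau} \tau_1 \trans{\tau} \cdots \trans{\tau} \tau_k \trans{a} \tau'$$
with every $\tau_i \sim \mu\gamma$ and $\tau' \sim \mu'\gamma$.

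The key observation is that every $\tau_i$ must have the form $\nu_i\gamma$ with $\nu_i \neq \eps$, and $\tau' = \rho\gamma$ for some (possibly empty) $\rho$. Indeed, leftmost-rewriting from $\nu_i\gamma$ with $\nu_i \neq \eps$ either produces $\nu_{i+1}\gamma$ with $\nu_{i+1} \neq \eps$ (a step inside $\nu$) or collapses the last variable of $\nu_i$, sending the state to exactly $\gamma$; the latter would yield $\gamma \sim \tau_{i+1} \sim \mu\gamma$, contradicting $\mu\gamma \not\sim \gamma$. The response therefore transports verbatim to the $\delta$-world,
$$\nu\delta \trans{\tau} \nu_1\delta \trans{\tau} \cdots \trans{\tau} \nu_k\delta \trans{a} \rho\delta,$$
and the side conditions $(\mu\delta, \nu_i\delta) \in \calB$ and $(\mu'\delta, \rho\delta) \in \calB$ reduce exactly to $\mu\gamma \sim \nu_i\gamma$ and $\mu'\gamma \sim \rho\gamma$, both already in hand (with the corner $\rho = \eps$ absorbed into $\sim$ by the same redundancy argument as above). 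The check in the other direction is symmetric, and when $a = \tau$ the first clause of the branching-bisimulation definition provides an equally valid alternative whenever $\mu\gamma \sim \mu'\gamma$.

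For the cc-norm equality, the standard BPA fact that any path $\alpha\gamma \trans{w} \eps$ decomposes uniquely as $\alpha\gamma \trans{u} \gamma \trans{v} \eps$ gives
$$\bnorm{\alpha}_\gamma \;=\; \min\bigl\{\,|\{i : \alpha_i \not\sim_\gamma \alpha_{i+1}\}| \;:\; \alpha = \alpha_0 \to \cdots \to \alpha_n = \eps\,\bigr\},$$
an expression that depends on $\gamma$ only through $\sim_\gamma$, so $\sim_\gamma = \sim_\delta$ immediately delivers $\bnorm{\alpha}_\gamma = \bnorm{\alpha}_\delta$. The single non-routine step is the ``stays within $\nu$'' observation, and this is exactly where the hypothesis $\red(\gamma) = \red(\delta)$ is activated---via absorption into $\sim$ in the trivial sub-case, and via the $\not\sim$-contradiction in the main case.
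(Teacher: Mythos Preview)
Your proof is correct and follows essentially the same approach as the paper: the same candidate relation $\calB = {\sim} \cup \{(\mu\delta,\nu\delta)\mid \mu\gamma\sim\nu\gamma\}$, the same transport of a response from the $\gamma$-world to the $\delta$-world, and the same derivation of $\bnorm{.}_\gamma=\bnorm{.}_\delta$ from $\sim_\gamma=\sim_\delta$ via paths $\alpha\gamma\to\gamma$. The only difference is organizational: you dispose of the sub-case $\mu\gamma\sim\gamma$ upfront (so that the response can never reach $\gamma$), whereas the paper keeps it as a second response-case and shows it forces $\alpha\delta\sim\delta\sim\beta\delta$; the underlying argument is the same.
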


\begin{proof}
Suppose $\red(\gamma) = \red(\delta)$; we will verify
that the set
\begin{center}
$\B=\{(\alpha\delta,\beta\delta)\mid \alpha\gamma\sim\beta\gamma \} \,\,
\cup\, \sim$
\end{center}
is a branching bisimulation.
We note that $\B$ is symmetric, and 
consider 
\begin{center}
 a pair $(\mu,\nu)\in\B$ and 
a transition 
 $\mu \trans{a} \mu'$.
\end{center}
 If $\mu\sim\nu$, then either $a=\tau$ and $\mu'\sim\nu$, in which case
 $(\mu',\nu)\in\B$, or there is a response
$\nu = \nu_0 \trans{\tau} \nu_1 \trans{\tau} \cdots \trans{\tau} \nu_k
\trans{a} \nu'$
such that $(\mu',\nu')$ and
$(\mu,\nu_i)$ for all $i\in\{1,2,\dots,k\}$ are in
$\sim$, and thus in $\B$.

Now assume 
\begin{center}
$(\mu,\nu)=(\alpha\delta,\beta\delta)$ where
$\alpha\gamma\sim\beta\gamma$.
\end{center}
If $\alpha=\eps$, then $\gamma\sim\beta\gamma$,
which entails
$\beta\in(\red(\gamma))^*=(\red(\delta))^*$, and thus 
$\mu=\alpha\delta=\delta\sim \beta\delta=\nu$
(by Prop.~\ref{prop:relnorm});
similarly, $\beta=\eps$
also implies $\mu\sim\nu$. We thus assume that $\alpha\neq\eps$ and
$\beta\neq\eps$; hence 
\begin{equation}\label{eq:mustep}
\mu=\alpha\delta \trans{a} \alpha'\delta=\mu'
\textnormal{ for a transition }
\alpha\trans{a}\alpha'.
\end{equation}
Since 
$\alpha\gamma\sim \beta\gamma$, the transition
$\alpha\gamma\trans{a}\alpha'\gamma$ entails that 
either $a=\tau$ and $\alpha'\gamma\sim\beta\gamma$,
in which case $(\alpha'\delta,\beta\delta)=(\mu',\nu)\in\B$,
or
there is a corresponding response 
of one of the following forms:
\begin{enumerate}
	\item
$\beta \gamma = \beta_0 \gamma \trans{\tau} \beta_1 \gamma \trans{\tau}
\cdots \trans{\tau} \beta_k \gamma \trans{a}
\beta' \gamma$,
\item
$\beta \gamma = \beta_0 \gamma \trans{\tau} \beta_1 \gamma \trans{\tau}
\cdots \trans{\tau} \beta_k \gamma
\trans{\tau}\gamma=\gamma_0\trans{\tau}\gamma_1\trans{\tau}\cdots\trans{\tau}\gamma_\ell\trans{a}\gamma'$.
\end{enumerate}
In the case $1$ we have that $(\alpha'\gamma,\beta'\gamma)$
and 
$(\alpha\gamma,\beta_i\gamma)$ for all $i\in\{1,2,\dots,k\}$ are in
$\sim$; hence  $(\alpha'\delta,\beta'\delta)$
and 
$(\alpha\delta,\beta_i\delta)$ for all $i\in\{1,2,\dots,k\}$
are in $\B$, and thus $\nu=\beta \delta
= \beta_0 \delta \trans{\tau} \beta_1 \delta \trans{\tau}
\cdots \trans{\tau} \beta_k \delta \trans{a}
\beta' \delta$ is an appropriate response to 
$\mu\trans{a} \mu'$ (i.e., to
$\alpha\delta \trans{a} \alpha'\delta$ from~(\ref{eq:mustep})).

In the case $2$ we have $\alpha\gamma\sim \gamma$, and thus also
$\gamma\sim\beta\gamma$. Hence both $\alpha,\beta$ are from 
$(\red(\gamma))^*=(\red(\delta))^*$,
which entails 
$\mu=\alpha\delta\sim \delta\sim\beta\delta=\nu$.

Since $\B$ is a branching bisimulation, we derive that
$\sim_\gamma \, \subseteq \, \sim_\delta$, and 
by symmetry that $\sim_\gamma = \sim_\delta$.
This also entails that $\bnorm{.}_\gamma=\bnorm{.}_\delta$.
Indeed, we recall that $\bnorm{\alpha}_\gamma$ is the smallest cc-length
(counting only the class-changing transitions) of the paths
$\alpha\gamma=\alpha_0\gamma\trans{a_1}\alpha_1\gamma\trans{a_2}\cdots\trans{a_n}\alpha_n\gamma=\gamma$.
Since $\alpha_{i-1}\gamma\trans{a_i}\alpha_i\gamma$ is a
class-changing transition (i.e.,
$\alpha_{i-1}\not\sim_{\gamma}\alpha_i$) iff 
$\alpha_{i-1}\delta\trans{a_i}\alpha_i\delta$ is a class-changing transition, we
derive that  $\bnorm{\alpha}_\gamma=\bnorm{\alpha}_\delta$.
\qed
\end{proof}	

\begin{remark}
The fact that just $\red(\gamma)$ determines whether
$\alpha\gamma\sim\beta\gamma$ or 
$\alpha\gamma\not\sim\beta\gamma$ 
is not true in the case of weak bisimilarity,
as illustrated by the example below.
\end{remark}

\begin{example}
Consider the system defined by:
\[
\begin{array}{rclcrclcrcl}
A & \trans{\tau} & \eps & \hspace{0.3cm} &
B & \trans{\tau} & \eps & \hspace{0.3cm}  &
X & \trans{a} & CX \\
A & \trans{b} & \eps & \hspace{0.3cm}&
B & \trans{b} & \eps & \hspace{0.3cm} &
X & \trans{x} & \eps  \\
A & \trans{a} & C & \hspace{0.3cm} &
C & \trans{c} & \eps\\
\end{array}
\]
One can easily check that $\red(X) = \emptyset$, clearly also $\red(\eps) = \emptyset$; here we define
the set of redundant variables with respect to the weak bisimilarity.
However, $AX \approx BX$, while $A \not\approx B$.
To see that $A \not\approx B$ note that $A$ have a transition labelled by $a$, while $B$ does not have such.
Let us show now that indeed $AX \approx BX$. For every transition from $BX$ there is an identical one from $AX$.
The only transition of $AX$, which cannot be matched by identical one from $BX$ is
\[
AX \trans{a} CX.
\]
However the following response from $BX$ is completely fine
\[
BX \trans{\tau} X \trans{a} CX.
\]
Note that the same response would not be correct in the case of branching bisimilarity as $X \not \sim AX$.
\end{example}

\subsubsection*{Relativization with respect to $R\subseteq\var$.}

We say that a \emph{set} $R\subseteq\var$ is 
 \emph{suffix-generated} if  $R=\red(\gamma)$ for some
 $\gamma\in\var^*$; we often
 implicitly consider only suffix-generated 
 $R\subseteq\var$ in what follows.
For any (suffix-generated) set $R\subseteq\var$,
 Lemma~\ref{prop:redgammadetermines} 
 allows us to soundly define the \emph{$R$-equivalence} $\sim_R$ by 
 \begin{center}
 $\alpha\sim_R\beta$  
if $\alpha\gamma\sim\beta\gamma$ for some $\gamma$ where 
$R=\red(\gamma)$,
\end{center}
and the \emph{cc$_R$-norm} $\bnorm{.}_R$ by
 \begin{center}
 $\bnorm{\alpha}_R=\bnorm{\alpha}_\gamma$
for some $\gamma$ where 
$R=\red(\gamma)$.
 \end{center}
%
%
 By the \emph{cc$_R$-length} of a path
$\alpha\trans{u}\beta$ we mean the number of transitions in the path 
that change the
class of $\sim_R$. By 
\begin{center}
an \emph{$R$-witness path for} $\alpha$ 
\end{center}
we mean a shortest path $\alpha\trans{u}\eps$ whose cc$_R$-length 
is minimal, and thus equal to $\bnorm{\alpha}_R$.

We also relativize $\red$ (for redundant variables), putting
\begin{center}
$\red_R(\alpha)=\red(\alpha\gamma)$ for some $\gamma$ where 
$R=\red(\gamma)$.
\end{center}
Abusing notation, we also write $\red(\alpha,R)$ instead 
of  $\red_R(\alpha)$. We note that  
$\alpha\in R^*$ implies
$\red(\alpha,R)=R$.

%
%
The next proposition summarizes some consequences of the previous
facts. 
We note that $\sim_R$ are not congruences in general, the
respective properties (captured by (7)) are more subtle.

\begin{proposition}\label{prop:Rsummary}
\hfill\\
	(1) $\red(\eps) = \emptyset$; $\sim \ = \ \sim_{\red(\eps)}$;
$\bnorm{.} = \bnorm{.}_{\red(\eps)}$.
\\
(2)
$\bnorm{\alpha}_R\leq \norm{\alpha}$.
\\
(3) 
$\alpha\sim_R\beta$ implies
$\bnorm{\alpha}_R=\bnorm{\beta}_R$.
\\
(4)
$\bnorm{\alpha\beta}_R=\bnorm{\alpha}_{R'}+\bnorm{\beta}_R$
where $R'=\red(\beta,R)$.
\\
(5)
$\alpha\sim_R\eps$ iff $\alpha\in R^*$
iff $\bnorm{\alpha}_R=0$;
in particular, $\bnorm{X}_R=0$ iff $X\in R$.
\\
(6)
If $\alpha\sim_R\beta$, then 
$\red(\alpha,R)=\red(\beta,R)$.
\\
(7) 
Suppose  $\alpha\sim_R\beta$ 
and $R'=\red(\alpha,R)$ ($=\red(\beta,R)$).
\\
Then $\gamma\sim_{R'}\delta$ iff 
$\gamma\alpha\sim_R\delta\beta$ (for any $\gamma,\delta$).
\end{proposition}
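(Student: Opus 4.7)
The plan is to reduce each of the seven items to its non-relativized counterpart by fixing a concrete witness $\gamma\in\var^*$ with $\red(\gamma)=R$; Lemma~\ref{prop:redgammadetermines} guarantees that the choice does not matter, so for each item we are free to pick a $\gamma$ convenient for the situation. Items (1)--(5) then fall directly out of Proposition~\ref{prop:relnorm} (together with Propositions~\ref{prop:bnormnecessary} and~\ref{prop:redundprefix}) applied to $\alpha\gamma$ or $\alpha\beta\gamma$; items (6) and (7) invoke in addition the congruence property (Proposition~\ref{prop:congruence}).

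For (1), the convention excluding silent variables forces $X\sim\eps$ to fail for every $X\in\var$, so $\red(\eps)=\emptyset$, and the equalities $\sim\, =\, \sim_\eps$ and $\bnorm{.}=\bnorm{.}_\eps$ then unfold directly from the definitions. Item (2) holds because any path $\alpha\trans{w}\eps$ with $|w|=\norm{\alpha}$ lifts to $\alpha\gamma\trans{w}\gamma$ whose cc-length is at most $\norm{\alpha}$; concatenating with a witness for $\gamma$ yields $\bnorm{\alpha\gamma}\leq\norm{\alpha}+\bnorm{\gamma}$, whence $\bnorm{\alpha}_R=\bnorm{\alpha\gamma}-\bnorm{\gamma}\leq\norm{\alpha}$. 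Item (3) is an immediate consequence of Proposition~\ref{prop:bnormnecessary}(2) applied to $\alpha\gamma\sim\beta\gamma$. For (4), choose $\gamma$ with $\red(\gamma)=R$; then $\red(\beta\gamma)=\red(\beta,R)=R'$, and the identity is a literal restatement of Proposition~\ref{prop:relnorm}(3) for $\alpha$, $\beta$, $\gamma$. Item (5) combines Proposition~\ref{prop:redundprefix} (which gives $\alpha\gamma\sim\gamma \iff \alpha\in(\red(\gamma))^*=R^*$) with Proposition~\ref{prop:relnorm}(4).

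The two remaining items exploit congruence. For (6), fix $\gamma$ with $\red(\gamma)=R$, so $\alpha\gamma\sim\beta\gamma$; for any $X\in\var$, Proposition~\ref{prop:congruence} gives $X\alpha\gamma\sim X\beta\gamma$, and transitivity then yields $X\alpha\gamma\sim\alpha\gamma \iff X\beta\gamma\sim\beta\gamma$, i.e.\ $\red(\alpha,R)=\red(\beta,R)$. For (7), fix $\eta$ with $\red(\eta)=R$; item (6) then gives $\red(\alpha\eta)=\red(\beta\eta)=R'$, so $\alpha\eta$ is a legitimate witness for $R'$. Hence $\gamma\sim_{R'}\delta$ unfolds to $\gamma\alpha\eta\sim\delta\alpha\eta$, whereas $\gamma\alpha\sim_R\delta\beta$ unfolds to $\gamma\alpha\eta\sim\delta\beta\eta$; since $\alpha\eta\sim\beta\eta$ yields $\delta\alpha\eta\sim\delta\beta\eta$ by congruence, the two conditions are equivalent by transitivity.

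I do not foresee a substantial obstacle: each item is a routine translation. The only subtlety, concentrated in (7), is recognizing that the common value $R'=\red(\alpha,R)=\red(\beta,R)$ supplied by (6) lets us reuse a single suffix $\alpha\eta$ (equivalently $\beta\eta$) as representative of $R'$, which reduces the $\sim_{R'}$-test to a $\sim$-test over configurations sharing the suffix $\alpha\eta$, where congruence of $\sim$ applies directly.
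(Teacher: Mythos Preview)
Your proposal is correct and follows essentially the same approach as the paper: items (1)--(6) are dismissed there as ``routine consequences of the definitions and previous facts'' (with the same remark about silent variables for $\red(\eps)=\emptyset$), and your proof of (7) is virtually identical to the paper's, which fixes $\mu$ with $\red(\mu)=R$, observes $\red(\alpha\mu)=R'$, and then chains $\gamma\sim_{R'}\delta \iff \gamma\alpha\mu\sim\delta\alpha\mu \iff \gamma\alpha\mu\sim\delta\beta\mu \iff \gamma\alpha\sim_R\delta\beta$. Your version simply spells out more of the routine steps.
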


\begin{proof}
Points (1)--(6) are routine consequences of the definitions and
previous facts; in particular,  
$\red(\eps) = \emptyset$
since we have excluded silent variables.

(7): Let  $\alpha\sim_R\beta$ 
and $R'=\red(\alpha,R)$. We fix $\mu$ such that 
 $\alpha\mu\sim\beta\mu$ and $\red(\mu)=R$; hence
$\red(\alpha\mu)=\red(\alpha,R)=R'$.

We thus get: 
$\gamma\sim_{R'}\delta$
iff $\gamma\alpha\mu\sim\delta\alpha\mu$
iff  $\gamma\alpha\mu\sim\delta\beta\mu$
iff $\gamma\alpha\sim_R\delta\beta$.
\qed
\end{proof}

\section{Relative prime decomposition}\label{sec:relative-prime-decomposition}



Recall that we implicitly assume a normed BPA system $\calG=(\var,\act,\calR)$.

\subsubsection*{Redundancy-free form, relative prime variables, relative prime form.}

We say that $\alpha\in\var^*$ is \emph{$R$-redundancy-free},
for $R\subseteq \var$, 
if we do not have $\alpha=\beta X\gamma$ where $X\in\red(\gamma,R)$.
In other words, $\eps$ is 
$R$-redundancy-free for any $R$, and $\alpha X$ is $R$-redundancy-free
if $X\not\in R$ and $\alpha$ is $R'$-redundancy-free
for $R'=\red(X,R)$. When saying \emph{redundancy-free}, we
mean $R$-redundancy-free for $R=\red(\eps)=\emptyset$.

We define the \emph{$R$-redundancy-free form} $\rff_R(\alpha)$ of
$\alpha\in\var^*$ inductively:
\begin{enumerate}[i)]
	\item
		$\rff_R(\eps)=\eps$; 
	\item
		if  $X\in R$, then $\rff_R(\beta X)=\rff_R(\beta)$;
	\item
		if  $X\not\in R$, then $\rff_R(\beta X)=\rff_{R'}(\beta)\,X$ where 
$R'=\red(X,R)$.
\end{enumerate}
We note that $\alpha\sim_R\rff_R(\alpha)$, and 
$\alpha$ is $R$-redundancy-free iff $\alpha=\rff_R(\alpha)$.

\smallskip

\emph{Remark.}
Technically we could omit the explicit definition  
of $\rff_R(\alpha)$. It serves us mainly as a demonstration that our
inductive definitions and proofs for configurations proceed in the 
``right-to-left'' (or ``bottom-up'') fashion.

\smallskip

A \emph{variable} $A\not\in R$ is \emph{$R$-decomposable}
if $A\sim_R \alpha$ where $\alpha$ is $R$-redundancy-free 
and $|\alpha|>1$; 
if $A\not\in R$ is not $R$-decomposable, then $A$ is
\emph{non-$R$-decomposable}.

\begin{proposition}\label{prop:redfree}
\hfill\\
	(1) If $\alpha$ is $R$-redundancy-free, then 
$\bnorm{\alpha}_R\geq |\alpha|$.
\\
(2)
If $A$ is $R$-decomposable, then 
$A\sim_R \beta B$ where $B$ is non-$R$-decomposable and
$\bnorm{A}_R>\bnorm{B}_R\geq 1$.
\end{proposition}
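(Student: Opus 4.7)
My plan is to prove (1) first by a direct induction on $|\alpha|$, and then to use (1) together with the substitution rule of Prop.~\ref{prop:Rsummary}(7) to handle (2) by induction on $\bnorm{A}_R$.

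For (1), the base case $\alpha=\eps$ is immediate since $\bnorm{\eps}_R=0$. In the inductive step I would write $\alpha=\beta X$; the definition of $R$-redundancy-freeness forces $X\notin R$ and makes $\beta$ itself $R'$-redundancy-free for $R'=\red(X,R)$. Then Prop.~\ref{prop:Rsummary}(4) splits $\bnorm{\beta X}_R$ as $\bnorm{\beta}_{R'}+\bnorm{X}_R$, Prop.~\ref{prop:Rsummary}(5) yields $\bnorm{X}_R\geq 1$ (because $X\notin R$), and the induction hypothesis gives $\bnorm{\beta}_{R'}\geq|\beta|$; summing delivers $\bnorm{\alpha}_R\geq|\alpha|$.

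For (2), given $R$-decomposable $A$, I would pick a witness $\alpha$ that is $R$-redundancy-free with $|\alpha|\geq 2$ and $A\sim_R\alpha$, write $\alpha=\beta' Y$, and apply (1) to the nonempty $R'$-redundancy-free prefix $\beta'$ (together with Prop.~\ref{prop:Rsummary}(4) once more) to obtain $\bnorm{A}_R=\bnorm{\alpha}_R\geq 1+\bnorm{Y}_R$, hence the strict inequality $\bnorm{A}_R>\bnorm{Y}_R\geq 1$. If $Y$ is already non-$R$-decomposable we are done, taking $B:=Y$ and $\beta:=\beta'$. Otherwise $\bnorm{Y}_R<\bnorm{A}_R$ and the induction hypothesis furnishes $Y\sim_R\beta'' B$ with $B$ non-$R$-decomposable and $\bnorm{Y}_R>\bnorm{B}_R\geq 1$.

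The step that requires the most care---and in my view the only nontrivial one---is lifting $Y\sim_R\beta'' B$ through the common prefix $\beta'$, since $\sim_R$ is \emph{not} a congruence in general. The right tool is Prop.~\ref{prop:Rsummary}(7): first use Prop.~\ref{prop:Rsummary}(6) to note $\red(Y,R)=\red(\beta'' B,R)=:R'$, and then instantiate (7) with the pair $(Y,\beta'' B)$ and $\gamma=\delta=\beta'$ (the trivially valid $\beta'\sim_{R'}\beta'$) to conclude $\beta' Y\sim_R \beta'\beta'' B$. Chaining with $A\sim_R\beta' Y$ then yields $A\sim_R(\beta'\beta'')B$ with the required norm chain $\bnorm{A}_R>\bnorm{Y}_R>\bnorm{B}_R\geq 1$, completing the proof.
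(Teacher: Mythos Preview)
Your proof is correct and follows essentially the same route as the paper's: part~(1) is the same induction (the paper compresses it to a one-line appeal to the fact that $X\gamma\not\sim\gamma$ forces positive cc-length on any segment $X\gamma\trans{u}\gamma$), and part~(2) is the same descent on $\bnorm{\,\cdot\,}_R$ (the paper phrases it as ``take $B$ with $\bnorm{B}_R$ minimal'' rather than explicit induction). Your careful justification of the prefix-lifting step $\beta'Y\sim_R\beta'\beta''B$ via Prop.~\ref{prop:Rsummary}(7) is a welcome addition---the paper performs this step without comment.
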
	
\begin{proof}
(1) follows from the already observed fact that
$X\gamma\not\sim\gamma$ implies that the cc-length of any path
$X\gamma\trans{u}\gamma$ is positive. 

(2): If $A$ is $R$-decomposable, then by definition we have $A\sim_R \beta B$
where $\beta B$ is $R$-redundancy-free, $\beta\neq\eps$, and 
$\bnorm{A}_R=\bnorm{\beta}_{R'}+\bnorm{B}_R$ where $R'=\red(B,R)$
(which follows from the definitions and Prop.~\ref{prop:Rsummary}).
Since $B\not\in R$ and 
$\beta$ is $R'$-redundancy free, we have $\bnorm{A}_R>\bnorm{B}_R\geq
1$.
Similarly, if $B$ is $R$-decomposable, then 
$B\sim_R \delta C$ where $\bnorm{B}_R>\bnorm{C}_R\geq 1$, which
entails $A\sim_R \beta B\sim_R \beta\delta C$.
We could thus immediately take $\beta$ and $B$
so that $\bnorm{B}_R$ is minimal, in which case $B$ is
non-$R$-decomposable.
\qed
\end{proof}

\textbf{$R$-prime form.}
For any (suffix-generated) $R\subseteq\var$, the equivalence $\sim_R$ induces a partition
on the set of all non-$R$-decomposable variables
(from $\var\smallsetminus R$). In each class of
this partition (containing non-$R$-decomposable variables 
that are pairwise equivalent w.r.t. $\sim_R$) we choose
a variable and call it an \emph{$R$-prime}.

\smallskip

\textbf{Convention.}
We further assume that the $R$-primes for all (suffix-generated)
sets $R\subseteq \var$ have been fixed, in our assumed nBPA system
$\calG=(\var,\act,\calR)$, unless stated otherwise.

\smallskip

We say that $\alpha\in\var^*$
is in the \emph{$R$-prime form} if 
\begin{itemize}
	\item
		either $\alpha=\eps$, 
	\item
		or 
$\alpha = \beta X$ where $X$ is an $R$-prime and $\beta$ is in the
$R'$-prime form for
$R'=\red(X, R)$.
\end{itemize}
The next lemma shows that for any $\alpha$ and
$R$ there is
the unique $\alpha'$ in the $R$-prime form such that
$\alpha\sim_R\alpha'$.

\begin{lemma}\label{lem:rel-udp}
\hfill\\
(1) For any  $\alpha\in\var^*$ and
$R\subseteq\var$ there is $\alpha'$ in the $R$-prime form where
$\alpha\sim_R\alpha'$.
\\
(2)
If $\alpha \sim_R \beta$ and both $\alpha$ and $\beta$ are in the
$R$-prime form, then $\alpha = \beta$.
\end{lemma}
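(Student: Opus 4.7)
My plan is proof by induction in both parts.

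For Part (1), I would induct on $\bnorm{\alpha}_R$. The base $\bnorm{\alpha}_R = 0$ is immediate by Prop~\ref{prop:Rsummary}(5): $\alpha \sim_R \eps$, and $\eps$ is in $R$-prime form. In the step, $\alpha \sim_R \rff_R(\alpha)$ with $\rff_R(\alpha)$ nonempty, say $\rff_R(\alpha) = \gamma Z$ where $Z \notin R$ (by the construction of $\rff_R$). If $Z$ is non-$R$-decomposable, let $X$ be the chosen $R$-prime with $Z \sim_R X$ and set $\delta = \eps$; otherwise Prop~\ref{prop:redfree}(2) gives $Z \sim_R \delta B$ with $B$ non-$R$-decomposable, and I let $X$ be the $R$-prime with $B \sim_R X$. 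Using Prop~\ref{prop:Rsummary}(7) as a congruence tool, $\alpha \sim_R (\gamma\delta) X$ and $\bnorm{\gamma\delta}_{\red(X,R)} = \bnorm{\alpha}_R - \bnorm{X}_R < \bnorm{\alpha}_R$; the induction hypothesis then yields $\gamma'$ in $\red(X, R)$-prime form with $\gamma\delta \sim_{\red(X,R)} \gamma'$, and $\gamma' X$ is the desired $R$-prime form of $\alpha$.

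For Part (2), I would induct on $|\alpha| + |\beta|$. If $\alpha = \eps$ then $\bnorm{\beta}_R = 0$ forces $\beta \in R^*$; since $\beta$'s last letter (if any) is an $R$-prime outside $R$, this forces $\beta = \eps$. Otherwise write $\alpha = \alpha_1 X$ and $\beta = \beta_1 Y$ with $X, Y$ $R$-primes; Prop~\ref{prop:Rsummary}(6) gives $R' := \red(X, R) = \red(Y, R)$. The key sub-claim is $X = Y$: once established, Prop~\ref{prop:Rsummary}(7) yields $\alpha_1 \sim_{R'} \beta_1$, both in $R'$-prime form with strictly smaller total length, and the induction hypothesis delivers $\alpha_1 = \beta_1$, whence $\alpha = \beta$.

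To establish $X = Y$, equivalently $X \sim_R Y$ (since distinct $R$-primes live in distinct $\sim_R$-classes by construction), I would pick $\mu$ with $\red(\mu) = R$ so that $\alpha_1 X \mu \sim \beta_1 Y \mu$, and follow a leftmost-rewriting path from $\alpha_1 X \mu$ to $X\mu$ (reducing only $\alpha_1$). Iterated branching bisim matching produces a reachable state $\rho$ from $\beta_1 Y\mu$ with $\rho \sim X\mu$. In the good case $\rho = \rho_1 Y \mu$ (with $\rho_1$ reachable from $\beta_1$), Lemma~\ref{prop:redgammadetermines} gives $\rho_1 Y \sim_R X$; since $X$ is non-$R$-decomposable and $Y \notin R$, the recursive computation $\rff_R(\rho_1 Y) = \rff_{R'}(\rho_1)\, Y$ together with $|\rff_R(\rho_1 Y)| \leq 1$ forces $\rff_{R'}(\rho_1) = \eps$, so $\rff_R(\rho_1 Y) = Y$, yielding $Y \sim_R X$ and hence $X = Y$.

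The main obstacle is ensuring that $\rho$ does have the form $\rho_1 Y \mu$, rather than a state in which $Y$ (or the suffix $\mu$) has already been partly consumed. This is where the branching (as opposed to weak) bisim discipline is essential: intermediate states in any response must lie in prescribed $\sim$-classes, and this constraint should let one argue that the matching may be chosen to stay within prefix reductions of $\beta_1$; making this rigorous is the technical heart of the uniqueness proof.
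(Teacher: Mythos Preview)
Your Part~(1) is correct and is essentially the paper's argument: induction on $\bnorm{\alpha}_R$, peel off a rightmost non-redundant variable, replace it using Prop.~\ref{prop:redfree}(2) by $\delta B$ with $B$ non-$R$-decomposable, pass to the designated $R$-prime, and recurse on the prefix with the smaller relative norm.

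In Part~(2) there is one small slip and one genuine gap.

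\emph{The slip.} Your appeal to Prop.~\ref{prop:Rsummary}(6) yields only $\red(\alpha_1 X,R)=\red(\beta_1 Y,R)$, not $\red(X,R)=\red(Y,R)$; the latter is what you need for the inductive step and it only follows \emph{after} $X=Y$ is established. This is harmless for the overall logic (you use $R'$ only after proving $X=Y$), but the sentence as written is unjustified.

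\emph{The gap.} You correctly isolate the crux --- showing the matching state $\rho$ has the form $\rho_1 Y\mu$ --- but then stop, saying this ``is the technical heart''. In fact the obstacle you fear dissolves once you observe two things: (i) every state visited by the concatenated response path is $\sim \alpha_1^{(j)}X\mu$ for some $j$ (branching bisimilarity forces this for intermediate $\tau$-states and for each endpoint); (ii) to leave the set $\{\sigma Y\mu : \sigma\in\var^*\}$ the response path must pass through $Y\mu$ itself, since transitions rewrite only the leftmost variable. Hence either $\rho=\rho_1 Y\mu$ and your ``good case'' argument applies, or $Y\mu$ occurs on the path, giving $Y\sim_R \alpha_1^{(j)}X$ for some $j$; then your same $\rff$ computation (with the roles of $X$ and $Y$ swapped) yields $X\sim_R Y$ and thus $X=Y$. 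So your route can be completed, but as submitted it is not a proof.

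The paper avoids this multi-step path-tracking entirely. It argues by minimal counterexample: assume distinct $R$-primes $X,Y$ with $\alpha X\sim_R\beta Y$, arrange WLOG $\bnorm{X}_R\ge\bnorm{Y}_R$, and take such an instance with a shortest $R$-witness path for $\alpha X$. The norm assumption gives $\alpha X\not\sim_R X$ (otherwise $X\sim_R\beta Y$ would make $X$ $R$-decomposable or force $X\sim_R Y$), hence $\alpha\neq\eps$ and $\bnorm{\alpha X}_R>\bnorm{Y}_R$, so also $\alpha X\not\sim_R Y$. Then the first witness step is $\alpha X\trans{a}\alpha' X$, and any branching response from $\beta Y$ cannot reach $Y$ (all intermediate states are $\sim_R\alpha X\not\sim_R Y$, and the last step is taken from such a state), so it ends in some $\beta' Y$ with $\alpha' X\sim_R\beta' Y$ --- a smaller counterexample. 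This single-step minimal-counterexample device is what replaces your unfinished multi-step tracking, and it is where the asymmetric assumption $\bnorm{X}_R\ge\bnorm{Y}_R$ earns its keep.
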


\begin{proof}
(1): We show the claim for 
$\alpha,R$ by induction on $\bnorm{\alpha}_R$.
If  $\bnorm{\alpha}_R=0$, i.e. $\alpha\in R^*$, then $\alpha\sim_R\eps$
(where $\eps$ is in the $R$-prime form).
We thus assume $\alpha=\beta X \gamma$ where $\gamma\in R^*$ and
$X\not\in R$; note that $\alpha\sim_R\beta X\gamma\sim_R\beta X$.
Recalling Prop.~\ref{prop:redfree}(2), we deduce that 
$X\sim_R \delta Y$ where 
$Y$ is non-$R$-decomposable (maybe $\delta=\eps$ and $X=Y$);
hence $\alpha\sim_R\beta X\sim_R \beta \delta Y\sim_R 
\beta \delta Z$ where $Z$ is the
(unique) $R$-prime such that $Y\sim_R Z$.
We  put $R'=\red(Z,R)$, and note that 
 $\bnorm{\alpha}_R=\bnorm{\beta\delta}_{R'}+\bnorm{Z}_R$;
 hence $\bnorm{\beta\delta}_{R'}<\bnorm{\alpha}_R$.
By the induction
hypothesis there is $\beta'$ in the $R'$-prime form 
such that $\beta'\sim_{R'}\beta\delta$; hence $\alpha\sim_R\beta'Z$
and $\beta' Z$ is in the $R$-prime form.

(2):
Suppose 
the claim is not true.
Then there are $\alpha \sim_R \beta$, both $\alpha$ and $\beta$ being in the
$R$-prime form, such that $\alpha=\alpha'X\gamma$ and
$\beta=\beta'Y\gamma$ where $X,Y$ are different $R'$-primes for
$R'=\red(\gamma,R)$; hence $\alpha'X\sim_{R'}\beta' Y$, and 
by symmetry we can assume $\bnorm{X}_{R'}\geq\bnorm{Y}_{R'}$.

More generally, if the claim is not true, then we deduce that
there are some $R,\alpha,\beta$, and two \emph{different} $R$-primes $X,Y$ where
\begin{center}
$\alpha X\sim_R\beta Y$
and
$\bnorm{X}_{R}\geq\bnorm{Y}_{R}$.
\end{center}
We take such a counterexample with a shortest possible $R$-witness
path for $\alpha X$.
(Note that we do not require that $\alpha X$ and $\beta Y$ are in the $R$-prime form.)

We note that $\alpha X\not\sim_R X$ (since otherwise
$X\sim_R \beta Y$, and thus
either
$\bnorm{X}_{R}>\bnorm{Y}_{R}$ and $X$ is $R$-decomposable,
or
$X\sim_R\beta Y\sim_{R}Y$, which  
is impossible by our definition of $R$-primes).
We thus also have $\alpha X\not\sim_R Y$
(since $\bnorm{\alpha X}_R>\bnorm{Y}_R$).

Hence the first step in the $R$-witness path for $\alpha X$ is 
$\alpha X\trans{a}\alpha'X$ (where $\alpha' X$ has a shorter
$R$-witness path than $\alpha X$).
Since $\alpha X\sim_R \beta Y$, and  $\alpha X\not\sim_R Y$,
this first step 
 must have a response
$\beta Y=\beta_0 Y\trans{\tau}\beta_1
Y\trans{\tau}\cdots\trans{\tau}\beta_k Y\trans{a}\beta'Y$ where
$\alpha'X\sim_{R}\beta'Y$
(and all $\beta_i$ are nonempty since $\alpha X\not\sim_R Y$);
this contradicts the choice of our
counterexample.
\qed
\end{proof}
Lemma~\ref{lem:rel-udp} allows us to soundly define 
 \begin{center}
 $\PD_R(\alpha)$, the \emph{prime decomposition of} $\alpha$ w.r.t.
(i.e., in the context of) $R$,
\end{center}
as the unique configuration in the $R$-prime form such that
$\alpha\sim_R\PD_R(\alpha)$.

We explicitly note the following consequences:

\begin{corollary}\label{cor:realprimedec}
\hfill\\
(1) For any $R\subseteq \var$ and $\alpha,\beta\in\var^*$ we have
$\alpha\sim_R\beta$ iff $\PD_R(\alpha)=\PD_R(\beta)$.
\\
(2) The mapping $\PD_R(\alpha)$ (with arguments $R,\alpha$) 
satisfies the following:
\begin{enumerate}[i)]
	\item
$\PD_R(\eps)=\eps$\,;		
\item
if $X\in R$, then $\PD_R(X)=\eps$\,;
\item
	if $X$ is an $R$-prime, then $\PD_R(X)= X$\,;
\item
if $X\not\in R$, 
then 
$\PD_R(X)=\PD_{R}(\beta Z)$ for the unique $R$-prime $Z$ and any
$\beta$ such that $X\sim_R \beta Z$\,;
\item
$\PD_R(\alpha X)=\PD_{R'}(\alpha)\, \PD_{R}(X)$ where 
 $R'=\red(\PD_R(X),R)$.
\end{enumerate}
\end{corollary}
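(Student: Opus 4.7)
Plan: The corollary essentially repackages Lemma~\ref{lem:rel-udp} into a workable calculus for $\PD_R$. For part~(1), the ``if'' direction is transitivity of $\sim_R$: $\alpha \sim_R \PD_R(\alpha) = \PD_R(\beta) \sim_R \beta$; the ``only if'' uses that $\PD_R(\alpha) \sim_R \PD_R(\beta)$ are both in the $R$-prime form, so Lemma~\ref{lem:rel-udp}(2) forces equality. Items (i)--(iv) of part~(2) I will read off directly from (1) and the definitions: $\eps$ is in $R$-prime form, so (i) is immediate; $X \in R$ yields $X \sim_R \eps$ by Prop.~\ref{prop:Rsummary}(5), giving (ii); an $R$-prime $X$ already sits in the $R$-prime form (as $\eps\, X$), giving (iii); and (iv) is (1) applied to $X$ and $\beta Z$, the existence of a suitable decomposition $X \sim_R \beta Z$ with $Z$ an $R$-prime being supplied by Prop.~\ref{prop:redfree}(2) together with the construction in the proof of Lemma~\ref{lem:rel-udp}(1).

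The substantive item is (v). I set $R' := \red(\PD_R(X), R)$; since $\PD_R(X) \sim_R X$, Prop.~\ref{prop:Rsummary}(6) also gives $R' = \red(X, R)$. The strategy is to verify that the candidate $\sigma := \PD_{R'}(\alpha)\,\PD_R(X)$ satisfies both (a) $\sigma \sim_R \alpha X$ and (b) $\sigma$ is in the $R$-prime form; Lemma~\ref{lem:rel-udp}(2) will then identify $\sigma$ with $\PD_R(\alpha X)$. Claim~(a) is a one-line application of Prop.~\ref{prop:Rsummary}(7) to the $\sim_R$-equivalent pair $X,\,\PD_R(X)$: the trivial $\alpha \sim_{R'} \PD_{R'}(\alpha)$ rewrites via (7) as $\alpha X \sim_R \PD_{R'}(\alpha)\,\PD_R(X)$.

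The main obstacle is (b), for which I will prove a slightly stronger auxiliary claim by induction on $|\sigma_2|$: whenever $\sigma_1$ is in $R_1$-prime form, $\sigma_2$ is in $R$-prime form, and $R_1 = \red(\sigma_2, R)$, then the concatenation $\sigma_1\sigma_2$ is in $R$-prime form. The base $\sigma_2 = \eps$ is trivial since $\red(\eps, R) = R$ forces $R_1 = R$ and $\sigma_1\sigma_2 = \sigma_1$. For the inductive step, write $\sigma_2 = \sigma_2'\,Z$ with $Z$ an $R$-prime and $\sigma_2'$ in $R''$-prime form for $R'' = \red(Z, R)$; the chain identity $\red(\sigma_2' Z, R) = \red(\sigma_2', R'')$, which unfolds directly from the definition $\red(\mu\nu, R) = \red(\mu\nu\gamma) = \red(\mu, \red(\nu\gamma)) = \red(\mu, \red(\nu, R))$ for any $\gamma$ with $\red(\gamma) = R$, yields $R_1 = \red(\sigma_2', R'')$; the induction hypothesis applied with base set $R''$ then puts $\sigma_1\sigma_2'$ in $R''$-prime form, and appending $Z$ gives $\sigma_1\sigma_2 = (\sigma_1\sigma_2')\,Z$ in $R$-prime form by the very definition. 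Instantiating with $\sigma_1 = \PD_{R'}(\alpha)$ and $\sigma_2 = \PD_R(X)$ (the latter is in $R$-prime form by definition, and the hypothesis $R_1 = \red(\sigma_2, R)$ is exactly the choice of $R'$) completes~(b) and hence~(v).
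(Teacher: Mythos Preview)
Your proof is correct and follows the route the paper intends: the corollary is stated in the paper without any explicit proof, as a direct consequence of Lemma~\ref{lem:rel-udp} and the surrounding definitions, and your argument spells out precisely those consequences. In particular, your auxiliary concatenation claim for item~(v) (that $\sigma_1$ in $R_1$-prime form and $\sigma_2$ in $R$-prime form with $R_1=\red(\sigma_2,R)$ yield $\sigma_1\sigma_2$ in $R$-prime form) is the natural way to make the ``obvious'' step rigorous, and your use of Prop.~\ref{prop:Rsummary}(7) for the $\sim_R$-part of~(v) is exactly the intended tool.
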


\smallskip

\emph{Remark.}
In the Example~\ref{ex:running},
$M_{12}$ is non-$\emptyset$-decomposable,
while $A$ is $\emptyset$-decomposable since $A \sim_\emptyset S_1 M_3$.
We have $\PD_\emptyset(C) = M_1 M_3 M_2$, so $\PD_{\{M_2, S_2\}}(C) = M_1 M_3$ and
$\PD_{\{M_{23}, M_2, M_3, S_2, S_3\}}(C) = M_1$.
Therefore $\PD_\emptyset(S_2 C M_{23}) = S_2 M_1 M_{23}$.

We note that we can have, e.g., $\PD_R(X) = XY$,
which may seem a bit unnatural.
This is illustrated by the example given by the following rules:
\begin{quote}
$X \trans{a} Y, 
\ Y \trans{b} Y, 
\ Y \trans{\tau} \eps.$
\end{quote}
Here $Y$ is $\emptyset$-prime, with $\bnorm{Y}_\emptyset = 1$, but
$X$ is not $\emptyset$-prime, since $X\sim_\emptyset XY$ and 
$\bnorm{X}_\emptyset = 2$. 
We have $Y\in\red(Y,\emptyset)$
(since $Y\sim_\emptyset YY$), and $X$ is $\{Y\}$-prime, with 
$\bnorm{X}_{\{Y\}} = 1$.

\section{Consistent bases}\label{sec:bases}

In this section we show that $R$-primes and $R$-decompositions
of non-$R$-primes 
of exponentially bounded sizes
can be guessed, and the consistency of the guess can be verified.
The candidates for decompositions will be captured by 
so called \emph{bases}; these are defined 
 via a technical notion of
 \emph{pre-bases}.

We again assume a given normed BPA system $\calG=(\var,\act,\rules)$,
now with no information about $R$-primes.
We now use the symbol $\B$ for (pre)bases, rather than for
bisimulations.

\subsubsection*{Pre-bases.}

A \emph{pre-base} $\B$ is determined by its domain 
$\dom(\B)\subseteq 2^\var$, where $\emptyset\in\dom(\B)$,
and two disjoint sets 
$\B_{dec}$ (\emph{decompositions}) and  $\B_{prop}$
(\emph{propagations})
containing triples $(A,\alpha,R)$ where
$A\in\var$, $\alpha\in\var^*$, $R\subseteq \var$, for which
the following
conditions hold:
\begin{enumerate}
	\item
	For each $R\in\dom(\B)$, the set $\var\smallsetminus R$ is partitioned
	into the set of \emph{$(\B, R)$-primes} and the set of 
\emph{$(\B, R)$-non-primes}. (Thus for $R=\emptyset$ each $X\in\var$
is either a $(\B, R)$-prime or a $(\B, R)$-non-prime.)
Moreover:
	\begin{itemize}
		\item	For each $(\B, R)$-non-prime $A$ there is precisely one triple $(A,\alpha,R)$ in
			$\B_{dec}$, where $\alpha=\beta B$ for a
			$(\B,R)$-prime $B$ (and $\beta$ can be empty).
		\item For each $(\B, R)$-prime $B$ there might be some triples $(B,XB,R)$ 
			in $\B_{prop}$ (where $X\in\var$). 
			We put 
			\begin{equation}\label{eq:propag}
\red^\B(B,R)=\{X\mid (B,XB,R)\in\B_{prop}\}.
\end{equation}
	\end{itemize}
	\item $\B_{dec}$ and $\B_{prop}$ do not contain any other triples than those mentioned above.	
	\item $\dom(\B)$ is equal to the least set that contains $\emptyset$ and is closed under propagation,
		which means that if $R'=\red^\B(B,R)$ for some $R\in\dom(\B)$ and some $(\B, R)$-prime $B$,
		then $R'\in\dom(\B)$.  
\end{enumerate}
Informally,  a triple $(A, \alpha, R) \in \B_{dec}$
can be viewed as a statement 
that $A$ is not an $R$-prime and that 
$\PD_R(A) = \alpha$ (which includes the case $\alpha=B$ for a
guessed $R$-prime $B$).
A triple $(A, XA, R) \in \B_{prop}$ 
can be viewed as a statement that 
$X\in\red(A,R)$
(and that $A$ is an $R$-prime).  

\textbf{$(\B,R)$-prime-decompositions.}
For a pre-base $\B$ and $R\in\dom(\B)$ we define 
the \emph{$(\B,R)$-prime-decomposition form} $\PD_R^\B(\gamma)$ of  
strings  $\gamma\in\var^*$ 
by the following (inductive) definition: 
\begin{enumerate}[i.]
	\item	$\PD_R^\B(\eps)=\eps$; 
	\item
		if $A\in R$, then 	$\PD_R^\B(\beta A)=\PD_R^\B(\beta)$; 
	\item
if $A$ is a $(\B, R)$-prime, then
$\PD_R^\B(\beta A)= \PD_{R'}^\B(\beta)\,A$ for  $R'=\red^\B(A,R)$;
\item 
if $A$ is a $(\B, R)$-non-prime,
then
		$\PD_R^\B(\beta A)= \PD_{R}^\B(\beta\alpha)$
		where $(A,\alpha,R)\in\B_{dec}$.
\end{enumerate}
Our definition of pre-bases has not excluded that 
 $\PD_R^\B(\gamma)$ could be infinite. 
 Nevertheless, our intention is
 that a triple $(A,\alpha,R)\in\B_{dec}$ captures 
 the guess $\PD_R(A)=\alpha$, which entails
 $|\alpha|\leq\bnorm{\alpha}_R=\bnorm{A}_R\leq \norm{A}$.
 We thus impose further (syntactic) conditions:

 \subsubsection*{Bases.}
 A base $\B$ is a pre-base that satisfies:
 \begin{enumerate}
	 \item
		For all $A\in\var$ and $R\in\dom(\B)$
		we have 
		$|\PD_R^\B(A)|\leq \norm{A}$. 
	 \item For each $(A,\alpha,R)\in\B_{dec}$ we have
		$\alpha=\PD_R^\B(\alpha)$.
\end{enumerate}

%
%
%
%

It is easy to verify the next fact:

\begin{proposition}
The size of any base is exponentially bounded, and the respective
conditions can be verified in exponential-time (w.r.t. the size of
$\calG$).
\end{proposition}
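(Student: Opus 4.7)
\medskip

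\noindent\textbf{Proof proposal.} The plan is to establish the size bound by straightforward counting and then give an explicit verification procedure; both rely on the fact that norms $\norm{A}$ are exponentially bounded in the size of $\calG$ (Proposition~(5) about the standard norm).

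First I would bound the size of a base $\B$. The set $\dom(\B)\subseteq 2^\var$ has at most $2^{|\var|}$ elements, which is exponential in $|\calG|$. For each $R\in\dom(\B)$, the set $\B_{dec}$ contributes at most $|\var|$ triples of the form $(A,\alpha,R)$, one per $(\B,R)$-non-prime $A$. For such a triple, the base condition~2 gives $\alpha=\PD_R^\B(\alpha)$, so $|\alpha|=|\PD_R^\B(\alpha)|$; combined with condition~1 applied to each variable $B$ occurring in $\alpha$ (and the bound $|\alpha|\le|\PD_R^\B(A)|\le\norm{A}$ obtained directly from condition~1 in the special case $\PD_R^\B(A)=\alpha$), one concludes $|\alpha|\le\norm{A}$, which is exponential. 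Similarly $\B_{prop}$ contributes, for each $R$ and each $(\B,R)$-prime $B$, at most $|\var|$ triples $(B,XB,R)$, each of linear size. Multiplying an exponential number of sets $R$ by polynomially many triples per $R$ of at most exponential size yields a total size that is still exponential.

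Next I would describe the verification. One scans the encoding of $\B$ and checks: (i) $\emptyset\in\dom(\B)$ and the partition of $\var\setminus R$ into $(\B,R)$-primes and $(\B,R)$-non-primes is well-formed for each $R\in\dom(\B)$; (ii) each triple in $\B_{dec}$ has the required shape $(A,\beta B,R)$ with $B$ a $(\B,R)$-prime, and each triple in $\B_{prop}$ has the shape $(B,XB,R)$ with $B$ a $(\B,R)$-prime; (iii) closure of $\dom(\B)$ under propagation, i.e., for every $R\in\dom(\B)$ and every $(\B,R)$-prime $B$, the set $\red^\B(B,R)$ computed via~(\ref{eq:propag}) lies in $\dom(\B)$. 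Items (i)--(iii) are clearly doable in exponential time given the exponential size of $\B$. For conditions~1 and~2 of the definition of a base, one computes $\PD_R^\B(\cdot)$ by the right-to-left inductive definition, \emph{capping the running length at $\norm{A}+1$}: whenever the partial expansion exceeds $\norm{A}$, reject condition~1; otherwise accept and use the resulting string to check $\alpha=\PD_R^\B(\alpha)$ for each $(A,\alpha,R)\in\B_{dec}$.

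The main subtlety I expect is ensuring that the computation of $\PD_R^\B$ is guaranteed to terminate on a badly formed pre-base, since the non-prime clause substitutes $A$ by $\alpha$ and could in principle loop. This is handled precisely by the length cap: since $\norm{A}$ is computable in polynomial time and exponentially bounded, we can safely run the expansion up to that bound and treat any overflow as a violation of condition~1. All the other checks reduce to traversals of an exponentially large structure using only elementary bookkeeping, and so can be carried out in deterministic exponential time in $|\calG|$. \qed
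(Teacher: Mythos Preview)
The paper does not give a proof of this proposition; it merely states that the fact is ``easy to verify''. Your argument is correct and fills in the details appropriately. The size bound follows exactly as you say (using that $\PD_R^\B(A)=\alpha$ whenever $(A,\alpha,R)\in\B_{dec}$, which is immediate from clause~iv of the definition of $\PD_R^\B$ together with base-condition~2), and the verification procedure is the natural one.

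Two minor remarks. First, pre-base condition~3 requires that $\dom(\B)$ be the \emph{least} set containing $\emptyset$ and closed under propagation; your item~(iii) only checks closure, not minimality. This is easily repaired by computing the closure of $\{\emptyset\}$ via a fixpoint iteration (at most $2^{|\var|}$ rounds) and comparing it with $\dom(\B)$. Second, your termination argument for computing $\PD_R^\B$ under a length cap is sound but could be spelled out more precisely: each application of clause~iv replaces a non-prime $A$ by some $\alpha=\beta B$ with $B$ a $(\B,R)$-prime, so clause~iii fires immediately afterwards and contributes one symbol to the output. Hence the number of iv-steps is bounded by the output cap $\norm{A}{+}1$, and the number of ii-steps is bounded by the total length ever inserted, which is at most the cap times the maximal $|\alpha|$ occurring in $\B_{dec}$. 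The whole computation is therefore polynomial in the size of $\B$, hence exponential in $|\calG|$.
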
	

\textbf{Base-generated equivalences.}
For a base $\B$ and any $R\in\dom(\B)$ 
we define the relation $\equiv_R^\B$ as follows:
\[
\alpha \equiv_R^\B \beta \defiff \PD_R^\B(\alpha) = \PD_R^\B(\beta).
\]
By definition, $A\equiv_R^\B\alpha$ for any triple $(A,\alpha,R)$ in
$\B$ (i.e., for any  $(A,\alpha,R)\in\B_{dec}$ and any
$(A,XA,R)\in\B_{prop}$; in the latter case we have $X\in\red^\B(A,R)$).

We put
\begin{center}
$\PD^\B(\alpha)=\PD^\B_\emptyset(\alpha)$
and $\equiv^\B=\equiv^\B_\emptyset$.
\end{center}

We also generalize~(\ref{eq:propag}),
defining $\red_R^\B(\alpha)$, rather written 
as $\red^\B(\alpha,R)$,
by the following inductive definition:
\begin{itemize}
  \item $\red^\B(\eps,R) = R$ (in particular, $\red^\B(\eps,\emptyset)
	  =\emptyset$);
 \item if $B$ is a $(\B, R)$-prime, then
	 $\red^\B(\alpha B,R) =  \red^\B(\alpha, R')$
where $R'=\red^\B(B,R)$;
 \item $\red^\B(\alpha,R) = \red^\B(\PD_R^\B(\alpha),R)$.
\end{itemize}

\smallskip

\textbf{Intended base.}
 Given an nBPA system $\calG=(\var,\act,\rules)$, an \emph{intended
 base} $\B$ arises by choosing the $R$-primes as described before
 Lemma~\ref{lem:rel-udp}, putting $(A,\PD_R(\alpha),R)$ in
 $\B_{dec}$ for each suffix-generated $R$ and each 
 non-$R$-prime $A$, and putting $(A,XA,R)$ in $\B_{prop}$ iff
$A$ is an $R$-prime and 
 $X\in\red(A,R)$.

 Corollary~\ref{cor:realprimedec}(1) shows that for any intended base $\B$
we have   $\equiv^\B_R=\sim_R$; in particular,
$\equiv^\B=\sim$, i.e., $\equiv^\B$ is the maximal branching
 bisimulation.

 Now we look for some suitable ``syntactic'' conditions guaranteeing
 that $\equiv^\B$, for a given (not necessarily intended) base $\B$, is a
branching bisimulation.

Each $\equiv^\B$ is symmetric, and we need to guarantee that for
each $\alpha\equiv^\B\beta$ and each transition
$\alpha\trans{a}\alpha'$ 
we have that
either $a=\tau$ and  
$\alpha'\equiv^\B\beta$, or 
there is a corresponding response 
$\beta=\beta_0\trans{\tau}\beta_1\trans{\tau}\cdots\trans{\tau}\beta_k\trans{a}\beta'$
where $\alpha'\equiv^\B\beta'$
and $\alpha\equiv^\B\beta_i$ for all $i\in\{1,2,\dots,k\}$.

\smallskip

\textbf{Legal-move outcomes, consistent bases.}
A natural idea is to define \emph{the set of $(\B,R)$-legal (move) outcomes
for}
$\alpha\in\var^*$, denoted $\LM^\B_R(\alpha)$
as a subset of $\act\times\var^*$:
\begin{quote}
$(a,\alpha')\in\LM^\B_R(\alpha)$ if 
\smallskip
\begin{enumerate}[i)]
	\item
		either $a=\tau$ and
$\alpha'=\PD^\B_R(\alpha)$,
\item
or there is a sequence 
$\alpha=\alpha_0\trans{\tau}\alpha_1\trans{\tau}\cdots\trans{\tau}\alpha_k\trans{a}\alpha''$
where $\alpha'=\PD^\B_R(\alpha'')$ and
$\PD^\B_R(\alpha_i)=\PD^\B_R(\alpha)$ for all $i\in\{1,2,\dots,k\}$.
\end{enumerate}
\end{quote}
A \emph{pair} $(\alpha,\beta)$ is \emph{$(\B,R)$-consistent} 
if $\alpha\equiv^\B_R\beta$
and
$\LM^\B_R(\alpha)=\LM^\B_R(\beta)$.

A \emph{base} $\B$ is \emph{consistent} if for each $(A,\alpha,R)$ in $\B$ we have
that $(A,\alpha)$ is a $(\B, R)$-consistent pair. 

\smallskip

\textbf{Rest of a proof of Theorem~\ref{th:brbisnexptime}.}
We now show that checking consistency of a base can be done in
exponential-time w.r.t. the size of $\calG$, that any intended base is
consistent, and that 
$\equiv^\B$ is a branching bisimulation for any consistent base $\B$.
After establishing these facts, a desired algorithm is obvious.

\begin{lemma}\label{lem:baseexptime}
Checking consistency of a base can be done 
in exponential time (w.r.t. the size of $\calG$).
\end{lemma}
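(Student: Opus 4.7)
The plan is to iterate over all triples $(A,\alpha,R) \in \B_{dec} \cup \B_{prop}$ — of which there are at most exponentially many by the preceding size bound on bases — and for each one verify both parts of $(\B,R)$-consistency: (a) $A \equiv_R^\B \alpha$ and (b) $\LM_R^\B(A) = \LM_R^\B(\alpha)$.

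Part (a) is essentially free. Unfolding the inductive definition of $\PD_R^\B$ together with condition~2 in the definition of a base yields, for each $(A,\alpha,R) \in \B_{dec}$, $\PD_R^\B(A) = \alpha = \PD_R^\B(\alpha)$, and a similar short calculation using the definition of $\red^\B(B,R)$ handles the propagation triples $(B,XB,R) \in \B_{prop}$. So only part (b) requires real work.

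For (b), I would compute $\LM_R^\B(\mu)$ for each $\mu \in \{A,\alpha\}$ by a BFS that starts at $\mu$ and follows only class-preserving $\tau$-transitions $\nu \trans{\tau} \nu'$ (those with $\PD_R^\B(\nu') = \PD_R^\B(\mu)$), recording at every visited $\nu$ and every outgoing $\nu \trans{a} \nu'$ the pair $(a,\PD_R^\B(\nu'))$. The main obstacle is bounding this BFS, since a priori the class-preserving $\tau$-reachable set of configurations may be infinite. The crucial point is that in a BPA only the leftmost variable of a configuration is rewritten, so the outcomes generated at a visited $\nu$ depend only on a bounded \emph{abstract state} — essentially the head variable of $\nu$ together with the prime decomposition of its remaining tail — which together determine both the available transitions and the $\PD_R^\B$-value of each successor (via the identity $\PD_R^\B(\gamma\eta) = \PD_{R'}^\B(\gamma)\cdot\PD_R^\B(\eta)$ with $R' = \red^\B(\eta,R) = \red^\B(\PD_R^\B(\eta),R)$). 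One can therefore transport the BFS onto this abstract state space using precomputed lookup tables of $\PD_R^\B(Y)$ and $\red^\B(Y,R)$ over $Y \in \var$, $R \in \dom(\B)$. Since the abstract state space has size at most $|\var|$ times the exponentially bounded number of prime decompositions of length $\leq \max_X \norm{X}$, the abstracted BFS terminates in exponential time and yields $\LM_R^\B(\mu)$. Combining with (a), each of the exponentially many triples is handled in exponential time, so the whole consistency check runs in exponential time in $|\calG|$.
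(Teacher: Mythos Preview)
Your outline and Part~(a) match the paper's treatment. The genuine gap is in Part~(b): your abstraction $(\text{head variable},\ \PD_R^\B(\text{tail}))$ does \emph{not} yield a well-defined transition system on which to run the BFS.

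You correctly observe that this pair determines the immediate outcomes $(a,\PD_R^\B(\nu'))$ of each outgoing transition $\nu\trans{a}\nu'$. What it does \emph{not} determine is the successor abstract state needed to continue the BFS. Concretely, suppose $\nu=Y\gamma$ with $Y\in\red^\B(\gamma,R)$ and $Y\trans{\tau}\eps$ is a (class-preserving) rule; the successor is $\gamma$, whose abstract state requires the concrete first symbol of $\gamma$. Two strings $\gamma,\gamma'$ with $\PD_R^\B(\gamma)=\PD_R^\B(\gamma')$ may begin with different variables (for instance, one may begin with a variable that is redundant in context and the other not), so the abstracted BFS has no canonical successor. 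You might try to rescue the argument by claiming that two configurations with the same abstract state have the same $\LM_R^\B$---but that is exactly Proposition~\ref{prop:consLMsame}, whose hypothesis is that $\B$ is consistent, which is what you are trying to check. The reasoning is circular.

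The paper avoids this by working with \emph{concrete} strings throughout, but normalizing them as it goes: the $\tau$-closure $\TC_R^\B(\beta)$ keeps, after rewriting a non-redundant head $Y$ to $\delta=\delta_1Z\delta_2$ with $\delta_2\in(R')^*$, only $\delta_1Z\gamma$ rather than $\delta\gamma$. This purely syntactic trimming of redundant suffixes keeps every closure element short enough that the whole closure can be computed by dynamic programming in time $|\beta|$ times an exponential in $|\calG|$, without assuming anything semantic about $\B$.
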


\begin{proof}
Given a base $\B$ (whose size is exponentially bounded w.r.t. $\calG$
by definition), we need to verify that $\LM^\B_R(A)=\LM^B_R(\alpha)$ for each triple
$(A,\alpha,R)$ in $\B$.
There are at most exponentially many triples, 
so it suffices to focus on one of them;
we have either  $(A,\alpha,R)\in\B_{dec}$, in which case
$\alpha=\PD^\B_R(A)$,
or $(A,\alpha,R)=(A,XA,R)\in\B_{prop}$; 
in both cases we thus have $\PD^\B_R(A)=\PD^\B_{R}(\alpha)$,
i.e., $A\equiv^\B_R\alpha$.

It suffices to show how to compute 
$\LM^\B_{R}(\beta)$ (for any $\beta$). We first define the
\emph{$\tau$-closure} $\TC^\B_R(\beta)$ as the least set $\calT$ that
contains $\beta$ and satisfies the following condition:

for any $Y\gamma\in \calT$, $R'=\red^\B_R(\gamma,R)$, and a rule
$Y\trans{\tau}\delta$ of $\calG$, we have: 

\begin{enumerate}[i)]
	\item
	if $Y\in R'$ and
$\delta\in (R')^*$, then 
$\delta\gamma\in \calT$;
\item
	if $Y\not\in R'$ and
$\delta=\delta_1 Z\delta_2$ where $\delta_2\in (R')^*$, 
$Z\not\in R'$, and $\PD^\B_R(\delta_1 Z)=\PD^\B_R(Y)$,
then 
$\delta_1 Z\gamma\in \calT$.
\end{enumerate}		
We can easily check that
$\LM^\B_R(\beta)=\{(\tau,\PD^\B_R(\beta))\}\cup\{(a,\PD^\B_R(\mu))\mid
\gamma\trans{a}\mu$ for some $\gamma\in \TC^\B_R(\beta)\}$.
A dynamic-programming algorithm computing $\LM^\B_R(\beta)$ is thus
obvious; 
its time-complexity is bounded by the length of $\beta$ multiplied by
an exponential function of the size of $\calG$.
\qed
\end{proof}


\begin{lemma}\label{lem:intbasecons}
Any intended base is consistent.
\end{lemma}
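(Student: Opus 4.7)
The plan is to reduce consistency of any intended base $\B$ to the ordinary branching-bisimulation game by appending a suitable $R$-suffix to each configuration. First, since $\B$ is intended, Corollary~\ref{cor:realprimedec} gives $\PD^\B_R = \PD_R$ and hence $\equiv^\B_R \,=\, \sim_R$ for every suffix-generated $R$. For any triple $(A,\alpha,R)$ in $\B$, it is immediate that $A \sim_R \alpha$: in the $\B_{dec}$ case $\alpha = \PD_R(A)$, while in the $\B_{prop}$ case $\alpha = XA$ with $X \in \red(A,R)$, so $XA \sim_R A$. The remaining task is to prove $\LM^\B_R(A) = \LM^\B_R(\alpha)$, and I will derive this from the more general statement that $\mu \sim_R \nu$ with $\mu,\nu \not\sim_R \eps$ implies $\LM^\B_R(\mu) = \LM^\B_R(\nu)$. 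This suffices because $A \in \var \smallsetminus R$ (by the partition clause of the pre-base definition) forces $A \not\sim_R \eps$; and since $\alpha \sim_R A$, Proposition~\ref{prop:Rsummary}(3,5) yields $\bnorm{\alpha}_R = \bnorm{A}_R > 0$, so $\alpha \not\sim_R \eps$ as well.

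By symmetry it suffices to show $\LM^\B_R(\mu) \subseteq \LM^\B_R(\nu)$. Fix any $\gamma$ with $\red(\gamma) = R$; Lemma~\ref{prop:redgammadetermines} then gives $\mu\gamma \sim \nu\gamma$. The trivial outcome $(\tau,\PD_R(\mu))$ equals $(\tau,\PD_R(\nu))$ and belongs to both sides. Any other outcome $(a,\xi) \in \LM^\B_R(\mu)$ unpacks as a path $\mu = \mu_0 \trans{\tau} \cdots \trans{\tau} \mu_k \trans{a} \mu''$ with $\mu_i \sim_R \mu$ for $i=1,\dots,k$ and $\xi = \PD_R(\mu'')$. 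Appending $\gamma$ produces an LTS-path
\[\mu\gamma = \mu_0\gamma \trans{\tau} \cdots \trans{\tau} \mu_k\gamma \trans{a} \mu''\gamma\]
with $\mu_i\gamma \sim \mu\gamma$. Applying to this path the standard multi-step consequence of branching bisimulation (a routine induction on $k$ in which each interior $\tau$-step is absorbed via the stuttering clause), and using $\mu\gamma \sim \nu\gamma$, yields either (i) $a=\tau$ and $\mu''\gamma \sim \nu\gamma$, whence $\xi = \PD_R(\nu)$ and the trivial $\tau$-outcome of $\nu$ suffices; or (ii) a response $\nu\gamma = \rho_0 \trans{\tau} \cdots \trans{\tau} \rho_m \trans{a} \rho'$ with $\rho_j \sim \nu\gamma$ and $\rho' \sim \mu''\gamma$.

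The hard part will be case (ii): one must convert this response from $\nu\gamma$ into a legal-move path rooted at $\nu$ itself, while a priori the transitions could dive into $\gamma$. The crux is that the hypothesis $\nu \not\sim_R \eps$ prevents such a dive. By induction on $j$ I plan to show $\rho_j = \nu_j\gamma$ with $\nu_j \neq \eps$ and $\nu_j \sim_R \nu$: the base case $j=0$ is immediate since $\nu \neq \eps$ (because $\nu \not\sim_R \eps$); for the step, as long as $\nu_j \neq \eps$ the next LTS-transition necessarily rewrites the leftmost variable of $\nu_j$ and keeps $\gamma$ as a suffix, defining $\nu_{j+1}$; and if some $\nu_{j+1}$ were $\eps$, then $\rho_{j+1} = \gamma \sim \nu\gamma$, which by Proposition~\ref{prop:redundprefix} would force $\nu \in R^*$, contradicting $\nu \not\sim_R \eps$. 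The same reasoning applied to $\rho_m \trans{a} \rho'$ yields $\rho' = \nu''\gamma$ with $\nu'' \sim_R \mu''$, whence $\PD_R(\nu'') = \xi$, and the sequence $\nu = \nu_0 \trans{\tau} \cdots \trans{\tau} \nu_m \trans{a} \nu''$ is the desired witness showing $(a,\xi) \in \LM^\B_R(\nu)$.
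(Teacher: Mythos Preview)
Your proof is correct and follows essentially the same approach as the paper: both use that an intended base satisfies $\equiv^\B_R=\sim_R$ and then transfer legal-move outcomes by invoking the branching-bisimulation response. The one notable difference is that you make explicit the step the paper treats tacitly---namely, appending a witness suffix $\gamma$ with $\red(\gamma)=R$ and arguing (via $\nu\not\sim_R\eps$) that the matching response from $\nu\gamma$ cannot descend into $\gamma$---which in fact supplies the rigor behind the paper's bare assertion that ``there must be a response from $\alpha$.''
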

\begin{proof}
Suppose $\B$ is an intended base.
By the definition of bases,
any $(A,\alpha,R)$ in $\B$ satisfies 
$A\equiv^\B_R\alpha$; moreover, we have
$PD^\B_R(A)=PD^\B_R(\alpha)\neq\eps$. We also note that 
$(\tau,PD^\B_R(A))$ belongs to both 
$\LM^\B_R(A)$ and $\LM^\B_R(\alpha)$.

Let $(a,\beta')\in\LM^\B_R(A)$, 
where $a\neq\tau$ or $\beta'\neq \PD^\B_R(A)$,
due to the sequence
\begin{center}
$A=\beta_0\trans{\tau}\beta_1\trans{\tau}\cdots\trans{\tau}\beta_k\trans{a}\beta''$
\end{center}
where 
$\PD^\B_R(\beta_i)=\PD^\B_R(A)$ for all $i\in\{0,1,\dots,k\}$
and $\beta'=\PD^\B_R(\beta'')$.
Since $\B$ is an intended base
(and thus $\equiv^\B_R=\sim_R$), we have
$A\sim_R\alpha\sim_R\beta_0\sim_R\beta_1\sim_R\cdots\sim_R\beta_k$;
we also have either $a\neq\tau$ or $\beta''\not\sim_R A$.
There must be a response from $\alpha$ (composed from responses to
the transitions $\beta_{i-1}\trans{\tau}\beta_i$ and 
$\beta_{k}\trans{a}\beta''$) of the form
\begin{center}
$\alpha=\alpha_0\trans{\tau}\alpha_1\trans{\tau}\cdots\trans{\tau}\alpha_\ell\trans{a}\alpha''$
\end{center}
where $\alpha_i\sim_R A$ for all $i\in\{0,1,\dots,\ell\}$, and 
$\alpha''\sim_R\beta''$. Hence $PD^\B_R(\alpha_i)=PD^\B_R(A)$
(for all $i\in\{0,1,\dots,\ell\}$)
and 
 $PD^\B_R(\alpha'')=PD^\B_R(\beta'')=\beta'$. Therefore 
 $(a,\beta')\in\LM^\B_R(\alpha)$, and thus  
$\LM^\B_R(A)\subseteq \LM^\B_R(\alpha)$.

Analogously we derive
$\LM^\B_R(\alpha)\subseteq \LM^\B_R(A)$, and thus 
$\LM^\B_R(A)=\LM^\B_R(\alpha)$.
\qed
\end{proof}

The remaining fact is captured by 
Lemma~\ref{lem:reductiontobase};
its main technical point is shown by the next proposition:

\begin{proposition}\label{prop:consLMsame}
If $\B$ is a consistent base
and $\alpha,\beta\in\var^*$ satisfy 
$\PD^\B_R(\alpha) = \PD^\B_R(\beta) \neq \eps$, then  
$\LM^\B_R(\alpha)=\LM^\B_R(\beta)$.
\end{proposition}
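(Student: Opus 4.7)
I aim to establish the stronger claim that for any consistent base $\B$, any $R \in \dom(\B)$, and any $\alpha$ with $\PD^\B_R(\alpha) \neq \eps$, we have $\LM^\B_R(\alpha) = \LM^\B_R(\PD^\B_R(\alpha))$; applied to both $\alpha$ and $\beta$ this yields the proposition. The approach is to view the recursive computation of $\PD^\B_R(\alpha)$ as a finite sequence of local rewrites of the shape $\mu_1 \rho \mu_2 \to \mu_1 \rho' \mu_2$ with $R' = \red^\B(\mu_2,R)$, where either $\rho = A$ and $\rho' = \nu$ for some $(A,\nu,R') \in \B_{dec}$, or $\rho = AB$ and $\rho' = B$ for some $(B,AB,R') \in \B_{prop}$; then verify that every such step preserves $\LM^\B_R$.

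As a preliminary, I would prove the \emph{left-congruence} identity
\begin{center}
$\PD^\B_R(\kappa\mu) = \PD^\B_{R''}(\kappa) \cdot \PD^\B_R(\mu)$ \ where \ $R'' = \red^\B(\mu,R)$,
\end{center}
by induction on $|\kappa|$ directly from the inductive definitions of $\PD^\B_R$ and $\red^\B$. It follows that $\mu \equiv^\B_R \mu'$ implies $\kappa\mu \equiv^\B_R \kappa\mu'$, so every rewrite above preserves the prime decomposition (and in particular the property $\PD^\B_R(\cdot) \neq \eps$). The single-step $\LM^\B_R$-preservation then splits into two sub-cases. If $\mu_1 \neq \eps$, transitions from either side act on the common leftmost variable of $\mu_1$, producing configurations still related by the same rewrite at the same position; their prime decompositions coincide by left-congruence, and the condition that intermediate $\tau$-configurations share the same prime decomposition transfers bijectively between the two sides. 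If $\mu_1 = \eps$, the key point is that $A$ is a $(\B,R')$-non-prime in the decomposition case and $B$ is a $(\B,R')$-prime in the propagation case, so $\PD^\B_{R'}(\rho) = \PD^\B_{R'}(\rho') \neq \eps$. Any $\tau$-path $\rho\mu_2 = \sigma_0\mu_2 \trans{\tau} \sigma_1\mu_2 \trans{\tau} \cdots \trans{\tau} \sigma_k\mu_2$ preserving $\PD^\B_R$ must then keep each $\sigma_i$ nonempty (since $\PD^\B_{R'}(\sigma_i) = \PD^\B_{R'}(\rho) \neq \eps$), so all rewrites act inside the prefix and the path projects onto a legal-move sequence $\rho \trans{\tau} \sigma_1 \trans{\tau} \cdots \trans{\tau} \sigma_k \trans{a} \sigma'$ in context $R'$. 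Consistency of $\B$ provides the matching legal move from $\rho'$ in context $R'$, and left-congruence lifts the match back to the full context $\mu_2$; symmetry yields the reverse inclusion.

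Chaining the single-step equality along the reduction $\alpha \to \cdots \to \PD^\B_R(\alpha)$ then yields the stronger claim. The main technical obstacle is the sub-case $\mu_1 = \eps$: consistency asserts $\LM$-equality for a single triple in an isolated right-context $R'$, and this must be transported through an arbitrary suffix $\mu_2$ that may itself be non-prime. The left-congruence identity is what isolates the role of $\mu_2$ to contributing only its derived context $R'$ to the leftmost-variable dynamics, and the $\tau$-closure description from the proof of Lemma~\ref{lem:baseexptime} encapsulates exactly why this isolation suffices for $\LM^\B_R$.
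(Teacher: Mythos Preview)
Your approach coincides with the paper's: both reduce the claim to a chain of single-step local rewrites coming from base triples and verify that each step preserves $\LM^\B_R$. There are, however, two genuine gaps in your case analysis.

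First, your two rewrite shapes do not suffice to compute $\PD^\B_R(\alpha)$. The recursive clause $\PD^\B_R(\beta A)=\PD^\B_R(\beta)$ for $A\in R$ produces a rewrite $\mu_1 A\to\mu_1$ with $\mu_2=\eps$ and $R'=R$; here there is no prime $B$ to the right of $A$, so no propagation triple applies. The paper covers this as the case ``$X\in R'$ and $\delta=\eps$'', and observes that since $\PD^\B_R(\mu_1 A)\neq\eps$ the prefix $\mu_1$ satisfies $\PD^\B_{R}(\mu_1)\neq\eps$, whence the legal-move path can never leave the form $\mu_1^{(i)}A$ and the trivial bijection works. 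You need to add this (easy) third rewrite type.

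Second, your sub-case $\mu_1\neq\eps$ is incomplete. You assert that ``transitions from either side act on the common leftmost variable of $\mu_1$'', but this is only true step by step while the running prefix stays nonempty; a legal-move path from $\mu_1\rho\mu_2$ may consume $\mu_1$ entirely via $\tau$-steps (which is possible exactly when $\PD^\B_{R''}(\mu_1)=\eps$ for $R''=\red^\B(\rho\mu_2,R)$) and then continue inside $\rho$, at which point your bijection collapses. The paper handles this by splitting not on $\mu_1=\eps$ versus $\mu_1\neq\eps$, but on whether the path ever ``uses $X$'': if $\nu\notin(R'')^*$ the path provably stays in the form $\nu_i X\mu$ and the bijection goes through; otherwise the tail of the path from the moment $\nu$ is emptied is treated exactly as in your $\mu_1=\eps$ argument, using consistency of the relevant triple. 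Your $\mu_1=\eps$ analysis is correct and is the right tool here, but you must explicitly invoke it for the suffix of the path once $\mu_1$ has been exhausted.
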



\begin{proof}
Suppose $\B$ is a consistent base and 
$\PD^\B_R(\alpha)=\PD^\B_R(\beta)\neq\eps$.
The definition of $\PD^\B_R(..)$ implies that
 $\alpha$ can be ``transformed'' into $\beta$ by using 
triples $(A,\gamma,R')$ from $\B$; we first transform $\alpha$ into 
$\PD^\B_R(\alpha)$, and then, by a ``backward-transformation'' we
change $\PD^\B_R(\alpha)=\PD^\B_R(\beta)$ into $\beta$. 
There is thus a (transformation) sequence 
\[
\alpha = \gamma_0, \gamma_1, \ldots, \gamma_m = \beta
\]
where $\PD^\B_R(\gamma_i)=\PD^\B_R(\alpha)$ for all $i\in\{0,1,\dots,m\}$
and each pair $(\gamma_i,\gamma_{i+1})$ satisfies the following:
\begin{center}
one of $\gamma_i,\gamma_{i+1}$ is in the form 
$\nu X \mu$ and the other in the form $\nu \delta \mu$, 
\end{center}
where 
$\mu=\PD^\B_R(\mu)$ and for $R'=\red^\B(\mu,R)$ we have 
\begin{enumerate}[i)]
	\item
		either $X\not\in R'$ and
		$(X,\delta,R')\in\B_{dec}$,
	\item
		or 
$X\in R'$ and $\delta=\eps$.
\end{enumerate}
It thus suffices  to check that we have  
$\LM^\B_R(\gamma_i) = \LM^\B_R(\gamma_{i+1})$ for each
$i\in\{0,1,\dots,m{-}1\}$. For a given pair $(\gamma_i,\gamma_{i+1})$
in the above form we now aim to show
that 
\begin{center}
$\LM^\B_R(\nu X \mu) = \LM^\B_R(\nu \delta \mu)$.
\end{center}
By our definitions, $(\tau,\PD^\B_R(\nu X \mu))$ is in both 
$\LM^\B_R(\nu X \mu)$ and $\LM^\B_R(\nu \delta \mu)$.
Suppose now that $(a,\rho)\in\LM^\B_R(\nu X \mu)$, where 
$a\neq\tau$ or $\rho\neq \PD^\B_R(\nu X \mu)$, due to a sequence 
\begin{equation}\label{eq:onelegmove}
\nu X
	\mu=\rho_{0}\trans{\tau}\rho_1\trans{\tau}\cdots\trans{\tau}\rho_k\trans{a}\rho'
\end{equation}	
where $\PD^\B_R(\rho_i)=\PD^\B_R(\nu X\mu)$ (for all
$i\in\{0,1,\dots,k\}$) and $\PD^\B_R(\rho')=\rho$.
If~(\ref{eq:onelegmove}) ``does not use $X$'', i.e., can be written as 
\begin{equation}\label{eq:oneform}
\nu X
\mu=\nu_0X\mu\trans{\tau}\nu_1X\mu\trans{\tau}\cdots\trans{\tau}\nu_k
X\mu\trans{a}\nu'X\mu
\end{equation}
(where $\nu=\nu_0\trans{\tau}\nu_1\trans{\tau}\cdots\trans{\tau}\nu_k
\trans{a}\nu'$), then we obviously have 
$(a,\rho)\in\LM^\B_R(\nu \delta \mu)$ (as shown by~(\ref{eq:oneform})
when $X$ is replaced with $\delta$).

We recall that $R'=\red^\B(\mu,R)$, and 
put $R''=\red^\B(X,R')=\red^\B(\delta,R')$.

If $\nu\not\in (R'')^*$, 
then~(\ref{eq:onelegmove}) must be always of the form 
~(\ref{eq:oneform}), by definition of legal outcomes; we thus further
assume  $\nu\in (R'')^*$. If $X\not\in R'$ (and thus $\delta\not\in
(R')^*$), then~(\ref{eq:onelegmove}) is 
either of the form~(\ref{eq:oneform})
or it uses $X$ but not $\mu$; in the latter case
we have a sequence $\nu X
\mu=\nu_0X\mu\trans{\tau}\nu_1X\mu\trans{\tau}\cdots\trans{\tau}\nu_k
X\mu\cdots\trans{a}\bar{\rho}\mu$ where $\nu_k=\eps$
and
$(a,\rho)=(a,\bar{\rho}\mu)$ where $(a,\bar{\rho})$ is in 
$\LM^\B_{R'}(X)$, and thus also in $\LM^\B_{R'}(\delta)$ since
$(X,\delta,R')\in\B_{dec}$ and
$\B$ is consistent; hence
$(a,\bar{\rho}\mu)=(a,\rho)\in\LM^\B_R(\nu\delta\mu)$. 

If ($\nu\in (R'')^*$ and) $X\in R'$, then 
$\delta=\eps$ and
$\mu\neq\eps$
(since $\PD^\B_R(\alpha)\neq\eps$); let us write
$\mu=A\mu'$ where $A$ is a $(\B,R''')$-prime for 
$R'''=\red(\mu',R)$ and we have $(A,XA,R''')\in\B_{prop}$. 
Now~(\ref{eq:onelegmove}) can also ``use $A$'', but cannot ``use $\mu'$'',
in which case 
$(a,\rho)=(a,\bar{\rho}\,\mu')$ where $(a,\bar{\rho})$ is in 
$\LM^\B_{R'''}(XA)$. Due to consistency of $\B$ we have
$(a,\bar{\rho})\in \LM^\B_{R'''}(A)$, and thus 
$(a,\bar{\rho}\mu')=(a,\rho)\in \LM^\B_{R}(\nu\delta \mu)$.

Hence $\LM^\B_R(\nu X \mu) \subseteq \LM^\B_R(\nu \delta \mu)$; the
direction $\LM^\B_R(\nu \delta \mu)\subseteq \LM^\B_R(\nu X \mu)$
follows analogously.
\qed
\end{proof}

\begin{lemma}\label{lem:reductiontobase}
If $\B$ is a consistent base, then $\equiv^\B$ is a branching
bisimulation.
\end{lemma}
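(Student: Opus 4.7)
The plan is to invoke Proposition~\ref{prop:consLMsame} and essentially let it do all the work. Since $\equiv^\B$ is the kernel of the map $\PD^\B_\emptyset$, it is automatically symmetric (and reflexive and transitive), so only the branching response condition requires verification.

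Suppose $\alpha \equiv^\B \beta$ and $\alpha \trans{a} \alpha'$. First I would dispatch the trivial case $\alpha = \eps$: then $\alpha$ has no first variable to rewrite, so the premise is vacuous. Otherwise $\alpha \neq \eps$, and a direct inspection of the clauses defining $\PD^\B_\emptyset$ shows that $\PD^\B_\emptyset(\alpha) \neq \eps$; indeed, at $R = \emptyset$ clause (ii) never fires, clause (iii) appends a $(\B,\emptyset)$-prime to the result, and clause (iv) replaces the last variable by some string that (by the pre-base conditions) itself ends in a $(\B,\emptyset)$-prime, so the recursion (which terminates by the bound $|\PD^\B_R(A)| \leq \norm{A}$) always produces something ending in a prime. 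Hence $\PD^\B_\emptyset(\beta) = \PD^\B_\emptyset(\alpha) \neq \eps$, and Proposition~\ref{prop:consLMsame} applies to give $\LM^\B_\emptyset(\alpha) = \LM^\B_\emptyset(\beta)$.

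A case split on the transition $\alpha \trans{a} \alpha'$ now finishes the argument. If $a = \tau$ and $\PD^\B_\emptyset(\alpha') = \PD^\B_\emptyset(\alpha)$, then $\alpha' \equiv^\B \alpha \equiv^\B \beta$, matching the first clause of branching bisimulation. Otherwise, viewing the single step $\alpha \trans{a} \alpha'$ through clause (ii) of the legal-move definition with $k = 0$ exhibits $(a, \PD^\B_\emptyset(\alpha')) \in \LM^\B_\emptyset(\alpha) = \LM^\B_\emptyset(\beta)$; unpacking membership on the right gives a sequence $\beta = \beta_0 \trans{\tau} \beta_1 \trans{\tau} \cdots \trans{\tau} \beta_k \trans{a} \beta''$ with $\PD^\B_\emptyset(\beta_i) = \PD^\B_\emptyset(\beta)$ for all $i \in \{1,\ldots,k\}$ and $\PD^\B_\emptyset(\beta'') = \PD^\B_\emptyset(\alpha')$. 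Translating back through the definition of $\equiv^\B$ yields $\alpha \equiv^\B \beta_i$ for every intermediate index $i$ and $\alpha' \equiv^\B \beta''$, which is precisely the second clause.

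I do not anticipate a real obstacle: the combinatorial content of the lemma has been entirely absorbed into Proposition~\ref{prop:consLMsame}. The only delicate point is the verification that $\PD^\B_\emptyset(\alpha) \neq \eps$ whenever $\alpha \neq \eps$, which is what permits the one application of that proposition; this is a short syntactic check, and everything else is a bookkeeping translation between ``legal-move outcomes'' and the textbook branching bisimulation clauses.
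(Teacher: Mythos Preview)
Your proposal is correct and follows essentially the same approach as the paper: both arguments reduce the lemma to Proposition~\ref{prop:consLMsame}, using that $\alpha\neq\eps$ forces $\PD^\B_\emptyset(\alpha)\neq\eps$ so the proposition applies. The paper phrases it as a short proof by contradiction while you unfold it directly and give more detail on the nonemptiness check; one small point you might make explicit is that in your ``otherwise'' branch the membership $(a,\PD^\B_\emptyset(\alpha'))\in\LM^\B_\emptyset(\beta)$ cannot come from clause~(i) (that would require $a=\tau$ and $\PD^\B_\emptyset(\alpha')=\PD^\B_\emptyset(\beta)=\PD^\B_\emptyset(\alpha)$, contrary to the case assumption), so clause~(ii) indeed yields the response sequence.
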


\begin{proof}
Suppose $\B$ is a consistent base, 
but $\equiv_\B$ is not a branching bisimulation.
Then there is a pair $(\alpha, \beta)$, where $\PD^\B(\alpha) =
\PD^\B(\beta)$,
and a transition $\alpha\trans{a}\alpha'$ that
has no adequate response (w.r.t. $\equiv^\B$).
This implies that $(a,\PD^\B_\emptyset(\alpha'))\in \LM^\B_\emptyset(\alpha)$
but $(a,\PD^\B_\emptyset(\alpha'))\not\in\LM^\B_\emptyset(\beta)$. 
Since 
$\PD^\B_\emptyset(\alpha) =
\PD^\B_\emptyset(\beta)\neq \eps$
(otherwise $\alpha=\eps$ and we have no transition
$\alpha\trans{a}\alpha'$), we get a contradiction with
Prop.~\ref{prop:consLMsame}.
\qed
\end{proof}

\section{Nondeterministic exponential-time algorithms}\label{sec:algor}

\subsubsection{Bisimilarity problem (Theorem~\ref{th:brbisnexptime}).}
As already mentioned, a proof of Theorem~\ref{th:brbisnexptime} is now
obvious.
A nondeterministic exponential-time algorithm, when given a normed
$\calG=(\var,\act,\rules)$ and $A,B\in\var$,  guesses an (at most exponential) 
representation of a pre-base $\B$, checks that it is a consistent base, and
that $\PD^{\B}(A)=\PD^\B(B)$. 
Lemmas~\ref{lem:baseexptime},~\ref{lem:intbasecons}, 
and~\ref{lem:reductiontobase} show that this nondeterministic
algorithm indeed works in exponential time and that it has a successful run
if, and only if, $A\sim B$ in $\calL_\calG$.

\subsubsection{Regularity problem (Theorem~\ref{th:regulnexptime}).}
By using the results of previous sections
we derive Lemma~\ref{lem:cruxregul}, which
is the crux of a nondeterministic exponential-time algorithm deciding
semantic finiteness of a given nBPA process.
But we first recall a general notion of the bisimilarity quotient
and note a simple fact.

\textbf{Brbis-quotient.}
Given an LTS $\calL=(\calS,\act,(\trans{a})_{a\in\act})$, 
the \emph{quotient-LTS} (w.r.t. branching bisimilarity)
is defined as
$\calL_\sim=(\{[s]; s\in\calS\}, \act,(\trans{a})_{a\in\act})$
where the states are the equivalence classes, 
hence $[s]=\{s'\mid s'\sim s\}$, and $[s_1]\trans{a}[s_2]$ iff there
are $s'_1,s'_2$ such that $s_1\sim s'_1$, $s_2\sim s'_2$, 
and $s_1\trans{a}s'_2$. 

Since $\{(s,[s])\mid s\in\calS\} \cup 
\{([s],s)\mid s\in\calS\}$ can be easily verified to be a branching
bisimulation (on the disjoint union of $\calL$ and $\calL_\sim$), we
have $s\sim [s]$.

\textbf{Brbis-(in)finiteness.}
We say that a \emph{state} $r$ in an LTS is \emph{brbis-finite} if 
$r$ is  branching bisimilar with 
some state in a finite LTS; otherwise $r$ is  \emph{brbis-infinite}.
We observe the next simple fact:

\begin{proposition}
Given an LTS $\calL=(\calS,\act,(\trans{a})_{a\in\act})$,
a state $r\in\calS$ is brbis-finite
if, and only if, the set
$\gensimreach(r)=\{[s]; r \trans{w} s
\textnormal{ for some } w \in \act^*\}$
of equivalence classes reachable from $r$ is finite.
\end{proposition}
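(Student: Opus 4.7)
The plan is to use the branching bisimilarity quotient $\calL_\sim$ together with the observation, recalled immediately above the statement, that $s\sim[s]$ in the disjoint union of $\calL$ and $\calL_\sim$. The guiding idea is that, modulo the bijection $[\cdot]$, the set $\gensimreach(r)$ corresponds precisely to the states reachable from $[r]$ in $\calL_\sim$, so brbis-finiteness of $r$ reduces to finiteness of this reachable part.

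For the $(\Leftarrow)$ direction, assume $\gensimreach(r)$ is finite and take $\calL'$ to be the restriction of $\calL_\sim$ to $\gensimreach(r)$. The first thing to check is that this set is closed under the transitions of $\calL_\sim$, so that $\calL'$ is a legitimate sub-LTS: if $[s]\in\gensimreach(r)$ is witnessed by $r\trans{w}s'$ with $s'\sim s$, and $[s]\trans{a}[t]$ in $\calL_\sim$ is witnessed by some $s''\sim s$ and $s''\trans{a}t'$ with $t'\sim t$, then from $s'\sim s''$ and $s''\trans{a}t'$ the branching bisimulation property yields a matching (possibly empty) sequence from $s'$ ending in some $t''\sim t'$; hence $r\trans{w}s'\trans{\ast}t''$, and $[t]=[t'']\in\gensimreach(r)$. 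The subcase in which the matching response is empty (so $a=\tau$ and $s'\sim t'$) just gives $[t]=[s]\in\gensimreach(r)$ already. Then $\calL'$ is a finite LTS containing $[r]$, and $r\sim[r]$ establishes brbis-finiteness.

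For the $(\Rightarrow)$ direction, assume $r\sim s$ with $s$ a state in some finite LTS $\calL_0$ (all arguments taking place inside the disjoint union of $\calL$ with $\calL_0$). A straightforward induction on the length of $w$ shows that for every $r'$ with $r\trans{w}r'$ there is some $s'$ reachable from $s$ in $\calL_0$ with $r'\sim s'$: each transition $r''\trans{a}r'''$ encountered along the way is matched, by branching bisimulation, by a (possibly empty, when $a=\tau$) sequence in $\calL_0$ ending at a state $\sim$-equivalent to $r'''$. Since $\calL_0$ has only finitely many reachable states, only finitely many equivalence classes $[s']=[r']$ can arise, and $\gensimreach(r)$ is finite.

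The only delicate point in either direction is the $a=\tau$ subcase of branching bisimulation, where a matching response may be the empty sequence; but in both arguments this is harmless, since the destination class then just coincides with an already-present class. Beyond this, the proof is a routine unwinding of the definitions.
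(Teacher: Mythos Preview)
Your proof is correct. The paper itself does not supply a proof of this proposition; it merely introduces it with ``We observe the next simple fact,'' treating the statement as a routine observation following from the existence of the quotient LTS $\calL_\sim$ and the fact $s\sim[s]$. Your argument makes explicit exactly the standard reasoning one would expect here: for the $(\Leftarrow)$ direction you exhibit the restriction of $\calL_\sim$ to $\gensimreach(r)$ as the required finite LTS (correctly verifying closure under transitions, including the $a=\tau$ subcase), and for the $(\Rightarrow)$ direction you use the finite witness LTS to bound the number of reachable classes via a transition-by-transition matching. There is nothing substantive to compare, since you have simply supplied the details the paper omits.
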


The next lemma is a consequence, when we recall that 
$\equiv^\B\subseteq\sim$ 
for any consistent base $\B$
(by Lemma~\ref{lem:reductiontobase}) and 
$\sim=\equiv^\B$
for any intended base 
(as follows from Corollary~\ref{cor:realprimedec}(1)), which is also
consistent (by Lemma~\ref{lem:intbasecons}).

\begin{lemma}\label{lem:cruxregul}
Given an nBPA system $\calG=(\var,\act,\rules)$,
a configuration $\alpha\in\var^*$ is 
brbis-finite
if, and only if,
there exists a consistent base $\B$ such that the set 
\[
\pdreach^\B(\alpha) = \{\PD^\B(\beta) \mid \alpha \trans{w} \beta
\textnormal{ for some } w \in \act^*\}
\]
of $(\B,\emptyset)$-prime-decompositions
of configurations reachable from $\alpha$
is finite.
\end{lemma}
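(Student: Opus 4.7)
The plan is to reduce the statement to the immediately preceding proposition, which characterises brbis-finiteness of $\alpha$ by finiteness of the set $\gensimreach(\alpha)$ of $\sim$-classes reachable from $\alpha$. Both directions then follow by comparing the two maps $\beta\mapsto\PD^\B(\beta)$ and $\beta\mapsto[\beta]_\sim$, viewed on the set of configurations reachable from $\alpha$, and using the bridge fact that $\equiv^\B$ always refines $\sim$ while coinciding with it for an intended base.

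For the ``only if'' direction I would start from $\alpha$ being brbis-finite, so that $\gensimreach(\alpha)$ is finite by the preceding proposition, and fix an intended base $\B$. By Lemma~\ref{lem:intbasecons} this $\B$ is consistent, and by the definition of an intended base together with Corollary~\ref{cor:realprimedec}(1) (applied with $R=\emptyset$) we have $\equiv^\B \,=\, \sim$. The assignment $[\beta]_\sim\mapsto\PD^\B(\beta)$ is then a well-defined bijection from $\gensimreach(\alpha)$ onto $\pdreach^\B(\alpha)$, so the latter is finite and $\B$ witnesses the right-hand side.

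For the ``if'' direction I would assume a consistent base $\B$ with $\pdreach^\B(\alpha)$ finite. By Lemma~\ref{lem:reductiontobase} the relation $\equiv^\B$ is a branching bisimulation, hence $\equiv^\B\,\subseteq\,\sim$; this means that $\PD^\B(\beta_1)=\PD^\B(\beta_2)$ implies $\beta_1\sim\beta_2$, so the assignment $\PD^\B(\beta)\mapsto[\beta]_\sim$ is a well-defined surjection from $\pdreach^\B(\alpha)$ onto $\gensimreach(\alpha)$. The latter set is therefore finite, and the preceding proposition delivers brbis-finiteness of $\alpha$. No genuine obstacle arises: the lemma is essentially a bookkeeping step built on top of Corollary~\ref{cor:realprimedec}(1), Lemma~\ref{lem:intbasecons} and Lemma~\ref{lem:reductiontobase}, the only care needed being to check well-definedness and surjectivity of the two maps, both of which are immediate from the definitions of $\gensimreach$ and $\pdreach^\B$.
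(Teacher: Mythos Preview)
Your proposal is correct and follows essentially the same route as the paper: both directions are reduced to the preceding proposition via the bridge facts $\equiv^\B\subseteq\,\sim$ for any consistent base (Lemma~\ref{lem:reductiontobase}) and $\equiv^\B=\,\sim$ for an intended base (Corollary~\ref{cor:realprimedec}(1) with Lemma~\ref{lem:intbasecons}). Your formulation with explicit well-defined maps between $\gensimreach(\alpha)$ and $\pdreach^\B(\alpha)$ just makes more precise what the paper's proof leaves implicit.
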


\begin{proof}
If $\B$ is a consistent base, 
then $\PD^\B(\beta_1)=\PD^\B(\beta_2)$ implies $\beta_1\sim \beta_2$.
Hence if $\pdreach^\B(\alpha)$ is finite, then we can reach 
only finitely many equivalence classes from $\alpha$.

On the other hand, if
$\alpha$ is branching bisimilar with a state 
in a finite LTS, then there are 
only
finitely many equivalence classes reachable from $\alpha$.
For any intended (and consistent) base $\B$ we thus have that
$\pdreach^\B(\alpha)$ is finite.
\qed
\end{proof}

\textbf{Finishing a proof of Theorem~\ref{th:regulnexptime}.}
A nondeterministic algorithm 
deciding brbis-finiteness of a given $\alpha$ (for a given nBPA
system $\calG=(\var,\act,\rules)$),
can just 
guess an (at most exponential) representation of a pre-base $\B$,
check that it is a consistent base, and that the set $\pdreach^\B(\alpha)$ is finite.
Checking consistency can be done in exponential time by
Lemma~\ref{lem:baseexptime};
it thus 
remains to show how to test finiteness of $\pdreach^\B(\alpha)$.

Given a (consistent) base $\B$, we 
say that $(A,R)$ is a \emph{PD-loop}, where $A\in\var$ and
$R\in\dom(\B)$, if there are $w\in\act^*$, $\beta\in\var^*$
such that $A\trans{w}A\beta$,
$\PD^\B_R(\beta)\neq\eps$, and $\red^\B(\beta,R)=R$.

We now show that $\pdreach^\B(\alpha)$ is infinite if, and only if,
there exists $\delta = A \gamma$ such that $\delta$ is reachable from $\alpha$,
$R = \red^\B(\gamma)$ and $(A, R)$ is a PD-loop;
in this case we say that $\alpha$ \emph{reaches a PD-loop}.

For the ``if'' direction note that then 
\[
\alpha \trans{v}
A \gamma \trans{w}A\beta \gamma \trans{w}
A\beta\beta \gamma \trans{w}
A\beta\beta\beta \gamma \trans{w}\cdots
\]
for some $v \in \act^*$.
This path visits $\alpha_0=A \gamma $, $\alpha_1=A\beta \gamma $, $\cdots$,
$\alpha_i=A\beta^i \gamma$, $\cdots$ where
the set $\{\PD^\B_\emptyset(\alpha_i)\mid i\in\N\}$ is infinite.

For the other direction, if 
$\pdreach^\B(\alpha)$ is infinite, then there must be a path
$\alpha=\alpha_0\trans{a_1}\alpha_1\trans{a_2}\alpha_2\trans{a_3}\cdots$
where the set  $\{\PD^\B_\emptyset(\alpha_i)\mid i\in\N\}$ is infinite (which
follows from K\"onig's Lemma).
Then there obviously must be $i<j$ such that $\alpha_i=A\gamma$, $\alpha_j=A\beta\gamma$, 
and $A\trans{w}A\beta$ (for some $w\in\act^*$),
where for $R=\red^\B(\gamma,\emptyset)$ we have 
$\red^\B(\beta\gamma,\emptyset)=\red^\B(\beta,R)=R$ and 
$\PD^\B_R(\beta)\neq\eps$; hence $(A, R)$ is a PD-loop.

For finding PD-loops we can construct a directed graph as follows. We
take the pairs $(A,R)$ where $A\in\var$, $R\in\dom(\B)$ as the
vertices.
Now we put an arc from $(A,R)$ to $(B,R')$ iff there is a rule
$A\trans{a}\gamma_1 B\gamma_2$ of $\calG$ such that 
$R'=\red^\B(\gamma_2,R)$; moreover, if there is such a case with 
$\PD^\B_{R}(\gamma_2)\neq \eps$, then we ``colour'' the arc as ``blue''.
It is easy to check that $(A,R)$ is a PD-loop iff there is a cycle from
$(A,R)$ to $(A,R)$ in the graph that contains at least one blue arc.

Let us call a pair $(B,R')$ \emph{PD-infinite}
if there is a path in the graph that starts in  $(B,R')$
and ends in a PD-loop  $(A,R)$.
Now we observe that $\alpha$ reaches a PD-loop iff it can be written 
$\alpha=\beta_1 X\beta_2$ where $(X,R)$ is PD-infinite for 
$R=\red^\B(\beta_2,\emptyset)$.
An algorithm proving Theorem~\ref{th:regulnexptime} is thus clear.

\bibliographystyle{splncs03}
\bibliography{citat}

\end{document}